\documentclass[pdflatex,sn-mathphys-num]{sn-jnl}

\usepackage{graphicx}%
\usepackage{multirow}%
\usepackage{amsmath,amssymb,amsfonts}%
\usepackage{amsthm}%
\usepackage{mathrsfs}%
\usepackage[title]{appendix}%
\usepackage{xcolor}%
\usepackage{textcomp}%
\usepackage{manyfoot}%
\usepackage{booktabs}%
\usepackage{algorithm}%
\usepackage{algorithmicx}%
\usepackage{algpseudocode}%
\usepackage{listings}%

\theoremstyle{thmstyleone}%
\newtheorem{theorem}{Theorem}
\theoremstyle{thmstyletwo}%

\theoremstyle{thmstylethree}%
\newtheorem{definition}{Definition}%

\theoremstyle{plain}
\newtheorem{corollary}{Corollary}

\theoremstyle{plain}
\newtheorem{lemma}{Lemma}

\theoremstyle{remark}
\newtheorem{observation}{Observation}

\usepackage{booktabs}
\usepackage{tabularx}
\usepackage{url}  
\usepackage{array}
\usepackage{hyperref} 
\usepackage{booktabs}

\usepackage{graphicx}

\usepackage{adjustbox}

\newcommand{\hdr}[1]{\multicolumn{1}{c}{\fontsize{4.5pt}{3pt}\selectfont\textbf{#1}}}
\newcommand{\hdrc}[1]{\multicolumn{1}{c}{\fontsize{4.5pt}{4pt}\selectfont\textbf{#1}}}

\begin{document}

\title[Article Title]{
Ultrabubble enumeration via a lowest common ancestor approach

}


\author*[1]{\fnm{Athanasios} \sur{E. Zisis}}\email{athanas.zisis@gmail.com}

\author*[1,2,3]{\fnm{Pål} \sur{Sætrom}}\email{pal.satrom@ntnu.no}

\affil*[1]{\orgdiv{Department Of Computer Science}, \orgname{Norwegian University of Science and Technology}, \orgaddress{\street{Sem Sælands vei 9}, \city{Trondheim}, \postcode{7034}, \country{Norway}}}

\affil[2]{\orgdiv{Department of Clinical and Molecular Medicine}, \orgname{Norwegian University of Science and Techonology}, \orgaddress{\street{Erling Skjalgsons gate 1}, \city{Trondheim}, \postcode{7491}, \country{Norway}}}

\affil[3]{\orgdiv{Sentral Stab}, \orgname{St. Olavs Hospital HF}, \orgaddress{\city{Trondheim}, \postcode{7006}, \country{Norway}}}

\abstract{
Pangenomics uses graph-based models to represent and study the genetic variation between individuals of the same species or between different species. In such variation graphs, a path through the graph represents one individual genome. Subgraphs that encode locally distinct paths are therefore genomic regions with distinct genetic variation and detecting such subgraphs is integral for studying genetic variation.

Biedged graphs is a type of variation graph that use two types of edges, black and grey, to represent genomic sequences and adjacencies between sequences, respectively. Ultrabubbles in biedged graphs are minimal subgraphs that represent a finite set of sequence variants that all start and end with two distinct sequences; that is, ultrabubbles are acyclic and all paths in an ultrabubble enter and exit through two distinct black edges. Ultrabubbles are therefore a special case of snarls, which are minimal subgraphs that are connected with two black edges to the rest of the graph.

Here, we show that any bidirected graph can be transformed to a bipartite biedged graph in which lowest common ancestor queries can determine whether a snarl is an ultrabubble. 
This leads to an $O(Kn)$ algorithm for finding all ultrabubbles in a set of $K$ snarls, improving on the prior naive approach of $O(K(n+m))$ in a biedged graph with $n$ nodes and $m$ edges. Accordingly, our benchmark experiments on real and synthetic variation graphs show improved run times on graphs with few cycles and dead end paths, and dense graphs with many edges.
}

\keywords{ultrabubbles, snarls, biedged graphs, variation graphs, pangenomics }



\maketitle

\section{Introduction}\label{sec2}

Instead of relying on a single reference, pangenomics studies all genomes from the species together. This approach helps reveal both the common and variable parts of the genomes and reduces reference bias. It is used to build shared genomic references, align sequencing reads across different haplotypes, and compare genomes on a large scale \cite{6}.

A pangenome encompasses the complete set of genetic material within a species, including both core genes found in all individuals and the variable accessory genes that differ across populations. This comprehensive view of a species' genetic diversity allows a deeper understanding of how different genetic elements contribute to phenotypic traits, adaptation, and evolutionary processes. By studying pangenomes, researchers can examine not only the shared genetic features but also the variations that play a role in individual and population-level differences \cite{6}.

A variation graph is a bidirected DNA sequence graph (see Preliminaries) that provides a structured way to represent pangenomes, capturing genetic variations within a single unified framework. Nodes represent DNA sequences and include a unique identifier along with an implicit reverse complement that indicates the opposite orientation of the sequence. Each orientation of a node is referred to as a node strand. Edges indicate the adjacencies that exist between the sequences of the nodes they join, defining the graph’s base topology by specifying how nodes are linked. Paths in the graph represent sequences or walks over the node strands, effectively encoding specific sequences, like genomes and haplotypes embedded in the graph. 
A biedged graph is an alternative, but equivalent representation of the bidirected variation graph \cite{1}. In the biedged graph, the DNA sequences are encoded along special black edges while the adjacencies are encoded as grey edges.
Using such graphs as references enables efficient modeling of genomic variation across populations and reduces reference bias compared with a single linear reference genome \cite{8}. 

In variation graphs, some of the subgraphs that represent allelic diversity are bubbles, superbubbles, and ultrabubbles \cite{1,2,3}, which all are variants of snarls (see Preliminaries) \cite{1}. A bubble in a directed graph is a subgraph $X$ that has two alternative, node-disjoint paths between the same entry $s$ and exit $t$ nodes in the graph \cite{3}. A superbubble generalizes bubbles by allowing multiple but finite internal paths between $s$ and $t$, while requiring $X$ to be minimal and $s$ and $t$ to be the only entry and exit nodes of $X$  \cite{3}. 
Superbubbles are relevant because they represent distinct variant subgraphs that have low computational cost and are likely to be error-free as they have finite paths and no dead ends \cite{4,9,16}.

BubbleGun \cite{3}, by implementing Ondera’s Algorithm \cite{10}, enumerates superbubbles and thus also bubbles, by scanning for $(s,t)$ candidates and 
aborting the scan if the subgraph contains cycles or nodes (other than $t$) without outgoing edges. 
Their algorithm has average linear time complexity and runs nearly linearly in practice for large graphs, but has a quadratic worst case of $O(n(n+m))$, where $n$ is the number of nodes and $m$ is the number of edges \cite{10}.

Ultrabubbles generalize superbubbles in biedged graphs by defining the ultrabubble subgraph $X$ to be connected by two distinct black edges (2-BEC) to the rest of the graph  \cite{1}. Similar to superbubbles, an ultrabubble $X$ is minimal and acyclic, and has only single entry and exit points $s$ and $t$. Snarls further generalize ultrabubbles by only requiring the subgraph to be 2-BEC and minimal. 

Snarls can be computed by first building a cactus graph \cite{1,17} that summarizes the graph’s min-cut structure and then producing a tree-based nesting of snarl components. This approach runs efficiently (near-linear in practice), in sharp contrast to a naive brute-force search over all side pairs. 
The cactus-based hierarchy returns a non-overlapping decomposition set of snarls, thereby omitting overlapping snarls that are still valid according to the formal snarl definition \cite{1}.

A snarl can be checked if it is an ultrabubble by running a breadth first search (BFS) or depth first search (DFS) in its subgraph to check for cycles or nodes (other than $s$ or $t$) with no outgoing grey edges. Therefore, assuming that the number of snarls in a biedged graph is $K$, the complexity of enumerating all ultrabubbles is $O(K(n+m))$. 

This paper focuses on identifying all ultrabubbles in variation graphs with a given set of snarls. Given the set of $K$ snarls of a variation graph with $n$ nodes and $m$ edges, we give an $O(Kn)$ algorithm that enumerates all the ultrabubbles of the graph. The algorithm, which is based on lowest common ancestor queries, improves on the $O(K(n+m))$ complexity of the naive approach and its run-time is close to $O(K)$ in most practical graphs.

The remainder of the paper is organized as follows. Section 2 sets up the preliminaries, Section 3 develops two algorithmic families: Section 3.1, which is based on classical methods, and Section 3.2, which presents our theoretical framework with full proofs and complexity bounds.
Section 4 details Materials and Methods, including datasets, preprocessing, and implementation specifics. Section 5 reports our results and compares our approach with the existing naive approach and relates the findings to the theory. Section 6 has discussions and conclusions, and Section 7 provides a supplement of specific theory mentioned in earlier Sections.

\section{Preliminaries}\label{sec2}
A directed graph $G = (V_{G}, E_{G})$ consists of a set of nodes $V_{G}$ connected by a set of directed edges $E_{G}$, where each edge $e = (v_i, v_j) \in E_{G}| v_i, v_j \in V_{G}$ represents a relationship with a specified direction (i.e. a path from node $v_i$ to node $v_j$). A bidirected graph generalizes the concept of directed graphs by allowing each end of an edge to have its own orientation ("left" or "right"), representing the side of the node to which the edge is connected. 

Replacing each node $v$ in a bidirected graph $D = (V_D, E_D)$ with two nodes $v_{\text{left}}, v_{\text{right}}$ connected by a special black edge transforms the bidirected graph into an equivalent biedged graph $B(D) = (V_{B(D)}, E_{B(D)})$ \cite{1}. The original edges of $D, E_D$ are referred to as grey edges. Consequently, $E_{B(D)} = B_{B(D)} \cup G_{B(D)}$, where $G_{B(D)} = E_D$ and $B_{B(D)} = \{\{v_{\text{left}}, v_{\text{right}}\} |
v\in V_D \}$. In the following, for any node $x$ in a biedged graph, $x'$ refers to the node on the other side of $x$'s connected black edge. When orientation matters, $L$ and $R$ refer to $v_{\text{left}}$ and $v_{\text{right}}$, respectively.

A cycle in a graph refers to a closed path that begins and ends at the same node, with 
no repeated nodes or edges along the way \cite{7}. In a directed graph, a directed cycle is a closed directed path that starts and ends at the same node without repeating any other node. Moreover, if self-loops are allowed then the $(v,v)$ arc is a length 1 directed cycle \cite{11}.

A connected component (component for short) is a subgraph of a graph in which every pair of nodes is connected by a path, with no connection to other nodes outside the subgraph. 
A bridge edge in a graph is an edge whose removal causes the graph to become disconnected, splitting it into separate components. This edge is essential for maintaining the graph’s overall connectivity \cite{7}.

In a directed graph $G = (V_{G}, E_{G})$, a strongly connected component is a maximal subgraph $G' = (V_{G'}, E_{G'})$ where $V_{G'}\in V_G, E_{G'}\in E_G$ such that every pair of nodes in $G'$ is connected by a path. That is, for each node pair in $G'$ there should be a path in both directions and $G'$ cannot be extended with additional nodes or edges from $G$ without breaking this property. 

In a directed graph, a node without incoming edges is called a source while a node without outgoing edges is called a sink \cite{11}. Both sources and sinks are single node strongly connected components.

A snarl \cite{1} is a minimal subgraph of a biedged graph, identified by a pair of distinct 
nodes $\{x, y\}$ with distinct black edges (i.e. $x$ and $y$ are not connected by a black edge), such that $\{x, y\}$ have the following properties:
\begin{itemize}
  \item Separation: removing the black edges connected to $x$ and $y$ divides the graph into
disconnected components. One component $X$ contains $x$ and $y$ but not $x’$ and $y’$.
  \item Minimality: there is no single black edge within $X$ that, if removed, would split $X$ into two or more disconnected components.
\end{itemize}

In a directed graph $G$, a subgraph defined by a pair of distinct nodes $(x, y)$ is called a superbubble if it meets the following criteria \cite{10}:
\begin{itemize}
    \item Reachable: there is a path from $x$ to $y$.
    \item Matching: the set of nodes $X$ reachable from $x$ without traversing $y$ matches the
set of nodes from which $y$ is reachable without traversing $x$; here, traversing a node means entering and exiting the node.
    \item Acyclic: the subgraph formed by $X$ has no cycles.
    \item Minimal: no other node in $X$ other than $y$ can form a valid superbubble pair with $x$; the same applies symmetrically for $y$.
\end{itemize}
In short, superbubbles are minimal subgraphs with single entry and exit nodes (i.e. there are no other sources or sinks in the subgraph) and a finite number of paths between those nodes.

Ultrabubbles generalize superbubbles in biedged graphs. Specifically, an ultrabubble is a snarl that is both acyclic and tip free, where a tip is a node that has no incoming or outgoing (that is, no incident) grey edges \cite{1}.

\section{Algorithms}\label{sec2}
In the biedged graph representation of variation graphs, the black edges represent sequence blocks and grey edges represent their adjacencies. Specific sequences, such as individual genomes, are encoded as paths through the graph. 
Any such path through the graph always alternates between black and grey edges and, in general, every orientation and traversal of both black and grey edges is valid and thus single segment node inversions and longer paths of them can take place \cite{1,2}.

In the following, we assume that every biedged graph is directed; 
that is, all edges can only be traversed in one direction, preventing inversions, including long inversions spanning multiple nodes. Specifically, 
we define the following traversal rules:
\begin{itemize}
    \item a black edge can be traversed only from its left ($L$) node to its right ($R$) node; 
    \item a grey edge can be traversed only an $R$ node of a black edge to a $L$ node of a black edge.
\end{itemize}
Through this approach, which is referred to as alignment-rescued inversions \cite{12}, we achieve a simpler graph topology. 
Specifically, the biedged graph is bipartite, consisting of the node sets $L$ and $R$.
From now on, when referring to traversals of the graph, we always mean that we follow the above traversal rules.

In practice, variation graphs, such as those that are available in GFA format, are bidirected, but these can be converted in linear time to a directed graph with a simple, only forward-flow form. Every sequence is kept in forward direction; whenever a connection would point the other way, we add a reverse complemented copy of that sequence and connect it normally with the corresponding orientation of the grey edge (adjacencies) and then we convert them to biedged graphs that have the characteristics we need; see Section 7 for details. 

Note that by following this restricted approach of the general biedged graphs we do not lose information since the above transformation preserves all paths of the given bidirected graph, in GFA format, since every path of it has an equivalent forward path in the directed one and you can always map results back to the original structure. Therefore, this approach stays solid while giving a simpler topology of the graph, at the expense of doubling the number of edges and nodes in the graph, in the worst case.

Without loss of generality, we assume that all biedged graphs are connected, since for the non-connected ones, each connected component can be processed independently as a single connected biedged graph. We assume that each biedged graph has a unique source (hereafter "root") such that all other nodes in the graph can be reached by traversing the graph from the root. 
As the root has no incoming edges, it represents the start of the pangenomic sequences encoded by the biedged graph. In practice, variation graphs may not have a unique root, but this can be resolved by transforming the graph as described in Section 7.
Finally, we assume that a biedged graph has at least one sink ("end"). 
If there are multiple such nodes, we choose the sink at the deepest level of a BFS tree started at the root; if there are more than one such node, we choose one of those arbitrarily. If the graph contains no sink, one can be constructed as described in Section 7.

We define the components $C_{\text{start}}$ and $C_{\text{end}}$ as any subgraph of a biedged graph that includes the root and the end nodes, respectively. Clearly, 
$C_{\text{start}}$=$C_{\text{end}}$ is possible.  
We define the following left-right orientation in a biedged graph. A node $x$ is to the left of node $y$ 
if and only if the shortest path from the root to $x$ is shorter than the shortest path from the root to $y$ in the biedged graph. Here, the length of a path is the number of edges traversed along the path. 

Considering the left-right orientation and a snarl $\{x,y\}$, we call the leftmost and rightmost frontier nodes $sn_1$ and $sn_2$, respectively. In case both nodes have the same shortest path from the root, we arbitrarily assign $sn_1$ and $sn_2$. 

Because of the previously defined traversal rules for biedged graphs, in a BFS from the 
root, every $R$ (right) node of a black edge can only be reached through the corresponding $L$ (left) node of the same black edge, since incoming grey edges at an $R$ node are forbidden. Moreover, since the graphs we are dealing with are connected and every node is reached from the root, 
every black edge will be an edge of the BFS tree.

A root in a biedged graph with more than one node is, by our previous definitions, always only connected to a single black edge. Consequently, the root is a left node and the connected node is a right node $R_r$ with distance 1 to the root. 
Moreover, since by the traversal rules and the structure of the graph, black and grey edges alternate, every $L$ node and $R$ node have even and odd distances from the root, respectively. Corollary~\ref{cor:2} follows from these observations. 

\begin{corollary}\label{cor:2}
A snarl defined by an $R$ and $L$ node cannot have frontiers $sn_1$ and $sn_2$ with equal distances from the root and thus we can always sort-characterize correctly and uniquely such a snarl as $R-L$ or $L-R$ depending on which of the two frontier nodes are closer to the root.
\end{corollary}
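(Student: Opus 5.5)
The plan is a parity argument on BFS distances from the root. Let $d(v)$ denote the length of a shortest path from the root to $v$ under the traversal rules. The claim is that every $L$ node has even $d$ and every $R$ node has odd $d$. Once this is established, a snarl $\{x,y\}$ with one frontier of each type has $d(x)\not\equiv d(y) \pmod 2$. Hence $d(x)\neq d(y)$, so the arbitrary tie-breaking for $sn_1,sn_2$ is never invoked.

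First I would prove the parity claim by induction on $d(v)$, or equivalently along a shortest path. The root is an $L$ node with $d=0$, as observed above: it has no incoming edges, so it must be the tail of a black edge. Under the traversal rules every edge leaves an $L$ node for an $R$ node (a black edge) or leaves an $R$ node for an $L$ node (a grey edge). So the directed graph is bipartite with parts $L$ and $R$, and every directed walk alternates between the parts. A shortest path from the root to $v$ therefore has even length when $v\in L$ and odd length when $v\in R$. The argument only needs the fact that every edge crosses between the parts, so no separate treatment of shortest paths versus arbitrary walks is required. It also uses the standing assumption that every node is reachable from the root, so that $d(v)$ is finite.

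Second I would conclude the statement. For a snarl with one $R$ frontier and one $L$ frontier, the two distances have different parities, so one is strictly smaller. I then set $sn_1$ to be the frontier with the smaller distance and $sn_2$ the other. This assignment is unique, and the snarl is labelled $R$–$L$ or $L$–$R$ according to the type of $sn_1$. That gives the ``correctly and uniquely'' sort-characterization.

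The only real obstacle is making sure the bipartite alternation holds with no exceptions. Self-loops are the case to check: a grey edge from an $R$ node to an $L$ node can never be a loop, and a black edge joins distinct $L$ and $R$ nodes. The other case is whether the transformations of Section 7 (adding a root or sink, or reverse-complement copies) preserve the rule that edges run only $L\to R$ (black) or $R\to L$ (grey). I would take this as part of the standing assumptions on the biedged graphs considered here.
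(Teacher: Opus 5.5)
Your proposal is correct and follows essentially the same argument as the paper. The root is an $L$ node at distance $0$, and every edge crosses between the $L$ and $R$ parts (black $L\to R$, grey $R\to L$), so $L$ nodes have even and $R$ nodes odd distance from the root; this forces distinct distances for a mixed $R$/$L$ frontier pair and hence a unique choice of $sn_1$ and $sn_2$, with your extra checks (root has no incoming edges, no exceptions to the alternation) only spelling out what the paper states in one line.
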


As every acyclic directed graph has a source and a sink \cite{11}, we get Corollary~\ref{cor:2.1}.
\begin{corollary}\label{cor:2.1}
    A biedged graph (isomorphic to a directed graph) that has no cycles has at least one source and one sink.
\end{corollary}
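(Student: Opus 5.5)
The plan is to reduce Corollary~\ref{cor:2.1} to the classical fact, cited from \cite{11}, that every finite acyclic directed graph has at least one source and one sink. The only work is to make precise the phrase ``isomorphic to a directed graph'' under the traversal rules fixed at the start of this section. These rules orient every edge: a black edge goes from its $L$ node to its $R$ node, and a grey edge goes from an $R$ node to an $L$ node. So the biedged graph $B$ is literally a directed graph on the same node set, with one arc per edge. ``Cycle'' in $B$ means a closed traversal obeying the rules, which is exactly a directed cycle of this directed graph. Sources and sinks are likewise the nodes with no incoming, respectively no outgoing, traversable edges.

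The argument then runs as follows. Assume $B$ has no cycles, so the associated directed graph is a finite DAG. I would give the self-contained argument rather than only quoting \cite{11}, since it is short. Take a longest directed path $v_0,\dots,v_k$; one exists because $B$ is finite and, being acyclic, every directed path is simple with at most $|V_B|$ nodes.

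\begin{itemize}
\item Suppose $v_k$ had an outgoing edge to some $w$. If $w$ lies on the path, this closes a cycle, which is excluded. If $w$ lies off the path, appending it gives a longer path, contradicting maximality. Hence $v_k$ is a sink.
\item Symmetrically, any incoming edge at $v_0$ either closes a cycle or extends the path backwards. Hence $v_0$ is a source.
\end{itemize}

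Nonemptiness of $B$, which is implicit since it has a root, guarantees that the path exists. A single node, with $k=0$, is then both a source and a sink.

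The only real obstacle is the bookkeeping: cycles, sources and sinks must all be read with respect to the oriented traversal rules rather than the undirected biedged structure. An undirected cycle alternating black and grey edges need not be a directed cycle. I would state this identification explicitly before invoking the DAG lemma.
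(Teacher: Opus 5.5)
Your proposal is correct and follows the paper's route. The paper obtains the corollary in one line by invoking the classical fact it cites, that every acyclic directed graph has a source and a sink, applied to the biedged graph oriented by the traversal rules; your longest-path argument simply supplies the standard proof of that fact, and your explicit identification of cycles, sources and sinks with their traversal-rule (directed) counterparts makes precise what the paper leaves implicit.
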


Following the snarl definition \cite{1}, the minimal snarl is a single grey edge that connects the two nodes of its adjacent black edges. Note that whereas this minimal snarl does not actually represent variations in the graph, it is still a valid snarl according to the definition and is also a valid minimal (trivial) ultabubble.

\begin{definition}\label{2.2}
    We call a minimal snarl a trivial snarl (see Fig.~\ref{fig:snarls}\textbf{D}).
\end{definition}

\subsection{Ultrabubbles and snarls by the naive approach}\label{sec:UBnaive}
Given the set of the snarls of a biedged graph as a set of tuples, where each tuple has the frontier nodes $sn_1$ and $sn_2$ of the snarl, to find which of them are ultrabubbles, we need for each snarl to check the subgraph defined by its frontier nodes $sn_1$ and $sn_2$ for the presence of cycles and tips, for example by using a DFS or BFS. If the snarl is free of cycles and tips it is an ultrabubble; otherwise, it is not.
 Therefore, assuming that the number of snarls is bounded by $K$, then the complexity of the naive approach is $O(K(n+m))$ since we need to run a DFS or BFS for each snarl.

\subsection{Ultrabubbles and snarls by lowest common ancestor queries}\label{sec:UBlca}
In a biedged graph, the edges of the source and sink are inherently black edges; consequently, the source and the sink are left and right nodes of the leftmost and the rightmost black edge, respectively. Moreover, by definition, the source cannot have an outgoing grey edge and the sink cannot have an incoming grey edge.Therefore, both the root and sink form tips, which gives Lemma~\ref{lem:3.1} and Theorem~\ref{thm:3.2}.

\begin{figure}[t]
  \centering
  \includegraphics[width=\linewidth]{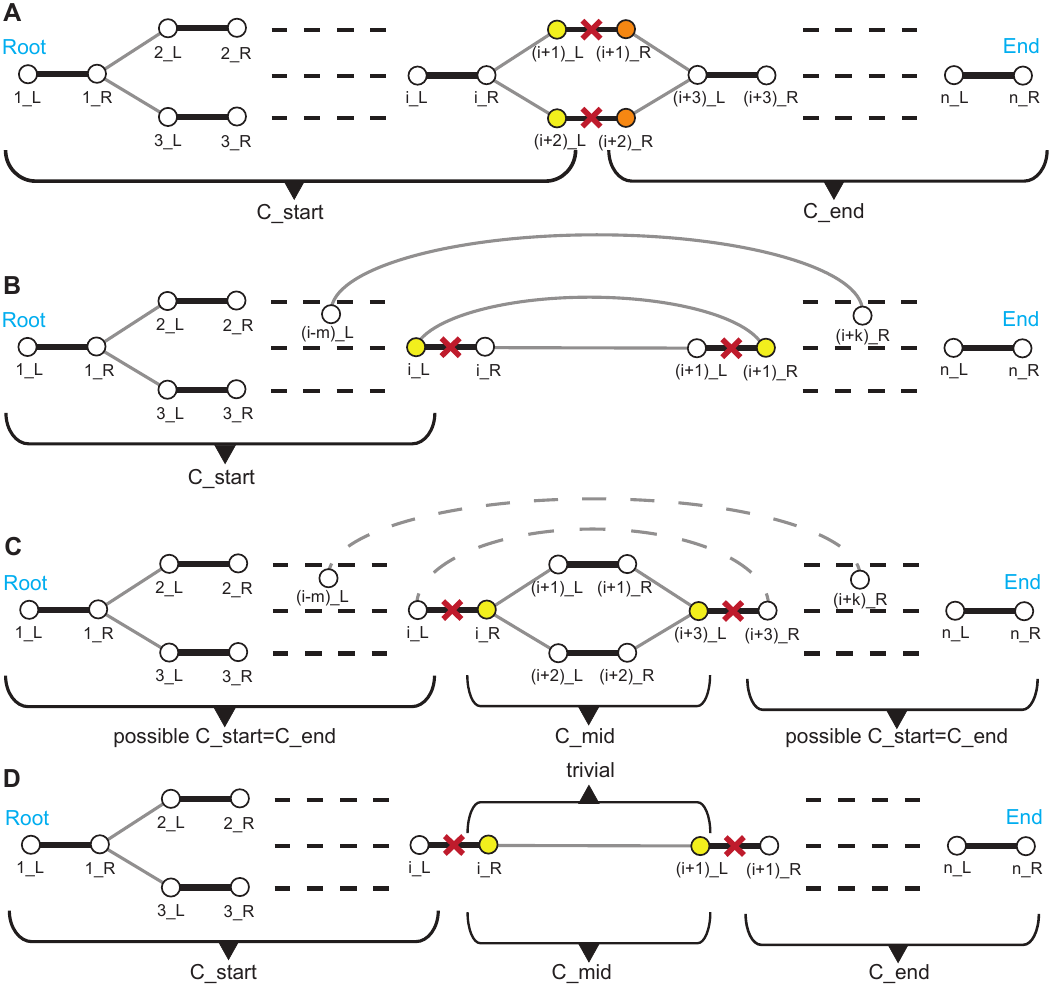} 
  \caption{Snarls that (\textbf{A-B}) cannot and (\textbf{C-D}) can form ultrabubbles in biedged graphs. Snarls are colored yellow or orange; red crosses indicate the corresponding removed black edges. \textbf{(A)} Any $L-L$ (left-left) or $R-R$ (right-right) snarl cannot form an ultrabubble because it belongs to the component $C_{\text{start}}$ or $C_{\text{end}}$, respectively.
  \textbf{(B)} Any $L-R$ snarl cannot form an ultrabubble  because the $L$ node, which is part of the snarl, will be in the $C_{\text{start}}$ component.
  \textbf{(C)}  
  Any $R-L$ snarl partitions the graph into three or two components. The former happens only if there are no grey edges connecting $C_{\text{start}}$ and $C_{\text{end}}$, in which case the removed black edges are bridge edges. In the latter case, at least one such grey edge exist and then $C_{\text{start}}$=$C_{\text{end}}$.
  \textbf{(D)} An $R-L$ snarl connected by a single grey edge is both a trivial snarl and a trivial ultrabubble.
}
  \label{fig:snarls}
\end{figure}

\begin{lemma}\label{lem:3.1}
Any snarl whose subgraph contains either $C_{\text{start}}$ or $C_{\text{end}}$, cannot form an ultrabubble.
\end{lemma}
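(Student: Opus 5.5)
The plan is to reduce the statement to the observation made just before it: the root and the sink are tips. By definition, $C_{\text{start}}$ is a subgraph containing the root and $C_{\text{end}}$ is a subgraph containing the end (sink). Suppose a snarl $\{x,y\}$ has subgraph $X$ with $C_{\text{start}}\subseteq X$ or $C_{\text{end}}\subseteq X$. Then $X$ contains the root or the sink as a vertex. Since an ultrabubble is a snarl that is acyclic and tip free, it suffices to show that this vertex is a tip of $X$ in the sense required by the ultrabubble definition.

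\textbf{Step 1 (root).} The root is a source, so it has no incoming edges. In a biedged graph with more than one node it is incident only to its black edge, which goes to $R_r$. Hence it has no incident grey edge at all, and in particular none inside $X$. So it is a tip of $X$.

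\textbf{Step 2 (sink).} The chosen end has no outgoing edges. By the traversal rules, any grey edge leaves an $R$ node and enters an $L$ node. An $L$ node always has an outgoing black edge, so the sink is an $R$ node. An $R$ node cannot have an incoming grey edge, and the sink has no outgoing grey edge, so it too has no incident grey edges and is a tip of $X$.

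\textbf{Step 3 (the subtle point).} The ultrabubble definition permits the frontier nodes $x,y$ themselves to lack grey edges on the outside, so we must check that the root or sink is not merely a frontier of the snarl. I would argue that this does not rescue the snarl. Suppose the root is a frontier $x$. Then $x'$ must be excluded from $X$ by the separation property. But the root's only edge is its black edge to $x'=R_r$, so after removing that black edge the root is isolated. Then $X$ cannot contain both $x$ and $y$ connected, unless $X$ is a single node, which is impossible since $x\neq y$ lie in one component. The symmetric argument applies to the sink. Hence the root or sink is an internal node of $X$ with no grey edges, that is, an internal tip.

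\textbf{Conclusion.} $X$ is not tip free, so the snarl is not an ultrabubble. The argument for Step 3 is where I expect the main obstacle, since it depends on how strictly ``tip'' is interpreted for frontier nodes. Corollary~\ref{cor:2.1} can additionally be cited to note that an acyclic $X$ must have a source and a sink, consistent with the tip viewpoint.
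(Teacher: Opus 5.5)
Your proof is correct and follows the paper's own reasoning. The paper justifies the lemma only by the remark just before it: the root (a left node with no incoming edges) and the sink (a right node with no outgoing edges) have no incident grey edges, so they are tips, and any snarl containing either one is not tip free. Your Step 3, which shows that the root or sink cannot be a frontier node because removing its black edge would isolate it, is a valid check that the paper omits. Under the paper's definition of a tip, which makes no exception for frontier nodes, that step is not actually needed.
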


\begin{theorem}\label{thm:3.2}
A snarl that is defined by 2 left (or 2 right) nodes of black edges can never form an ultrabubble (see Fig.~\ref{fig:snarls}\textbf{A}). 
\end{theorem}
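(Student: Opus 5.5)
The plan is to reduce the statement to Lemma~\ref{lem:3.1}: for an $L$--$L$ snarl the component $X$ must contain $C_{\text{start}}$, and for an $R$--$R$ snarl it must contain $C_{\text{end}}$. I will treat the $L$--$L$ case in detail and obtain the $R$--$R$ case by reversing all edge directions, which swaps $L$ with $R$ and the root with the end.

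Let $\{x,y\}$ be a snarl with both frontiers left nodes, and cut the black edges $\{x,x'\}$ and $\{y,y'\}$. In $X$, the nodes $x$ and $y$ keep only their grey incident edges. By the traversal rules, a grey edge at an $L$ node can only be incoming, coming from an $R$ node.

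The key step is to follow the BFS tree from the root. Every node of $X$ is reached from the root, and every black edge is a tree edge. Take a shortest root path to $x$ (or to any node of $X$) and walk it backwards, i.e. take predecessors. If this backward walk never leaves $X$, it ends at the root, so the root lies in $X$ and $X \supseteq C_{\text{start}}$.

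Otherwise the walk leaves $X$ at some point, and the crossing must use one of the removed black edges, since all other edges join nodes inside the same component. Going backward, a black edge is traversed from its $R$ end to its $L$ end. So the walk would leave $X$ through an $L$ node of $X$, namely $x$ or $y$, stepping to $x'$ or $y'$. But $x$ and $y$ are left nodes, so their black edges lead forward, out of them to $x'$ and $y'$, never into them. Hence no forward path from outside enters $X$, the backward walk cannot exit, and the root lies in $X$. Lemma~\ref{lem:3.1} then finishes the $L$--$L$ case.

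One detail needs care: the root might itself be $x$ or $y$, or lie in $X$ trivially. In either case $X$ contains the root, so the conclusion is the same.

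The main obstacle is making the notion of "$C_{\text{start}}$ contained in the snarl's subgraph" match Lemma~\ref{lem:3.1}. I would settle this by arguing directly that the root, a source with no incoming grey edge, then lies in $X$ and is a node other than a proper entry that is a tip or source. This violates tip-freeness, or, where the root equals a frontier node, gives a frontier with no incoming grey edge.

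For the $R$--$R$ case, the dual argument applies: forward paths from $X$ cannot leave $X$, so they reach a sink. By the choice of end, or more simply because any sink in $X$ that is not a frontier is a tip, this contradicts the ultrabubble definition. To guarantee that a sink exists, I would invoke Corollary~\ref{cor:2.1} if $X$ is acyclic; if it is not acyclic, it already fails to be an ultrabubble.
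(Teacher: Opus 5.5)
Your argument is correct. For the $L$--$L$ case it takes a different route from the paper's; for the $R$--$R$ case your final paragraph is the paper's argument.

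The paper treats both cases uniformly. It sets aside snarls containing $C_{\text{start}}$ or $C_{\text{end}}$ via Lemma~\ref{lem:3.1}, then assumes $X$ is acyclic and takes a source (for $L$--$L$) or a sink (for $R$--$R$) of $X$ from Corollary~\ref{cor:2.1}. It notes this node cannot be a frontier and shows it has no grey edges, so it is a tip. For $L$--$L$ you instead use the root hypothesis. The only edges between $X$ and the rest of the graph are $x\to x'$ and $y\to y'$, and both point out of $X$, so no node outside $X$ reaches $X$. Since the root reaches $x$, the root lies in $X$. As a left node with no incoming edges it has no grey edges, so it is a tip. It cannot equal $x$ or $y$, since a frontier with no grey edge would be isolated in $X$ after the cut. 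Your route proves a stronger fact without any acyclicity: every $L$--$L$ snarl contains the root. So the paper's ``$C_{\text{mid}}$'' case never occurs for $L$--$L$ snarls, and the claim in the caption of Fig.~\ref{fig:snarls}\textbf{A} is actually established. The paper's route, in exchange, does not depend on reachability from the root and handles both orientations with one argument; its acyclicity step alone already makes the Lemma~\ref{lem:3.1} split unnecessary.

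Some corrections:
\begin{itemize}
\item Keep the acyclicity version of the $R$--$R$ case and drop your opening claim that it follows by reversing edges. The hypotheses are not self-dual: nothing says every node reaches the end. Indeed, an $R$--$R$ snarl whose subgraph is a cyclic region with no sink need not contain $C_{\text{end}}$. So the ``by the choice of end'' alternative should go.
\item In that case, record the two checks you skipped. First, $x$ and $y$ are right nodes that must each have an outgoing grey edge (else they are isolated in $X$), so neither is a sink. Second, every left node of $X$ keeps its outgoing black edge, so any sink of $X$ is a right node with no grey edges, hence a tip.
\item Your backward-walk sentence garbles the orientation (``leave $X$ through an $L$ node \dots stepping to $x'$''). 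What you need is simply the one-line fact above: both cut edges point out of $X$.
\end{itemize}
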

\begin{proof}
Assume that we have a snarl defined by 2 left nodes $x$, $y$. If this snarl contains $C_{\text{start}}$ or $C_{\text{end}}$, then by Lemma~\ref{lem:3.1}, the snarl cannot form an ultrabubble. Assume that such a snarl is an ultrabubble and, consequently, it does not contain either $C_{\text{start}}$ or $C_{\text{end}}$. Then this snarl will belong to a different component, $C_{\text{mid}}$. Since it is an ultrabubble, $C_{\text{mid}}$ is an acyclic biedged graph. But then by Corollary~\ref{cor:2.1} , $C_{\text{mid}}$ should have at least one source and one sink. Since the nodes of the snarl are both left nodes of black edges, they cannot have outgoing edges, as each of the outgoing black edges from $x$ and $y$ have been removed, and, by definition, there can be no outgoing grey edges from left nodes of black edges. Consequently, both $x$ and $y$ constitute sinks in $C_{\text{mid}}$ and therefore $C_{\text{mid}}$ should have at least one source $s$ other than the nodes $x$, $y$ that define the snarl. The source $s$ cannot have incoming grey edges because this would imply either the presence of a cycle (contradiction) or that $s$ is not a source node because there is another node $s'$ forming the source in a subgraph consisting of at least $s'$, a black edge, another node, and the grey edge that is incoming to $s$. Note that we have not deleted any other black edge apart from the two edges incident of the snarl's frontier nodes. Hence, the source node is a left node of a black edge without any loops ending up to it, since the snarl has no cycles. But then this node, as explained above, is not adjacent to any grey edges and thus it is a tip which contradicts the assumption that the snarl is an ultrabubble. In the case that we have 2 right nodes defining a snarl, the proof is the same since these nodes will constitute sources and thus, we will have the presence of at least one sink which will form again a tip.
\end{proof}

\begin{observation}\label{obs:3.3}
By the definition of a snarl $\{x,y\}$, the deletion of the 2 black edges $(x,x’)$ and $(y,y’)$ can transform the biedged graph from consisting of one connected component to consisting of either 2 or 3 connected components, see Fig.~\ref{fig:snarls}\textbf{C}. 
The case with 3 components occurs if and only if the 2 black edges of the nodes that define the snarl are bridge edges in the biedged graph. In the other case of 2 components, the snarl could include either (i) $C_{\text{start}}$, (ii) $C_{\text{end}}$, (iii) both, or (iv) neither of $C_{\text{start}}$ and $C_{\text{end}}$. Therefore, for a snarl to contain neither $C_{\text{start}}$ nor $C_{\text{end}}$, either the 2 black edges $(x,x’)$ and $(y,y’)$ are bridge edges (3 components case) or the snarl is nested inside an area of the graph that does not include either $C_{\text{start}}$ or $C_{\text{end}}$; see Fig.~\ref{fig:snarls}\textbf{C}. 
\end{observation}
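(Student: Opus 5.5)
We argue with connectivity facts about the underlying undirected biedged graph $B$, which is connected. Let $H = B \setminus \{(x,x'),(y,y')\}$ denote the graph after deleting the two black edges of the snarl $\{x,y\}$. Deleting one edge from a connected graph leaves at most two components, so deleting two edges leaves at most three. The separation property of a snarl gives at least two, since the component $X$ contains $x,y$ but not $x',y'$. So $H$ has either two or three components, and the plan is to pin down when each occurs and then read off the consequences for $C_{\text{start}}$ and $C_{\text{end}}$.

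\textbf{Characterisation of the three-component case.} We would prove both directions.
\begin{enumerate}
\item Suppose both black edges are bridges of $B$. Deleting $(x,x')$ first yields two components. The edge $(y,y')$ remains a bridge of whichever component contains it, because a cycle through it in that component would already be a cycle in $B$. Deleting it therefore splits that component again, giving three components.
\item Conversely, suppose $H$ has three components. If $(x,x')$ were not a bridge of $B$, then $B\setminus(x,x')$ would be connected, and removing the single further edge $(y,y')$ would give at most two components, a contradiction. So $(x,x')$ is a bridge, and symmetrically so is $(y,y')$.
\end{enumerate}
This is routine; the only care needed is the remark that a bridge stays a bridge in the component containing it.

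\textbf{Cases for $C_{\text{start}}$ and $C_{\text{end}}$.} In the two-component case, $H$ consists of $X$ and the complement containing $x',y'$. The root and the end node each lie in one of these two components. Enumerating the placements gives exactly the cases (i)--(iv):
\begin{enumerate}
\item[(i)--(ii)] $X$ contains the component of exactly one of the root and the end node;
\item[(iii)] $X$ contains both, which means $C_{\text{start}}=C_{\text{end}}\subseteq X$;
\item[(iv)] $X$ contains neither, so the root and the end node both lie on the outside, in the component of $x',y'$, and $X$ is enclosed by it.
\end{enumerate}
In the three-component case we would observe that $X$ is the component separated off by both bridges. Since $x$ and $y$ lie in $X$ while $x'$ and $y'$ do not, the root and end node can sit in the two outer components, one containing $x'$ and one containing $y'$, leaving $X$ free of both.

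\textbf{The final dichotomy.} Suppose $X$ contains neither $C_{\text{start}}$ nor $C_{\text{end}}$. Then either we are in the three-component case, which by the characterisation above means the black edges are bridges, or we are in the two-component case (iv), where $X$ is nested in a region of the graph away from the root and end.

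\textbf{Main obstacle.} The mathematics is light. The hard part is making ``nested inside an area'' precise, since the observation states it only informally. I would formalise it as: in $H$, the component containing $x'$ and $y'$ also contains both the root and the end node. Under that reading, the last sentence of the observation follows directly from the case split above.
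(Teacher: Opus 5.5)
Your argument is correct, and it is more complete than what the paper offers. The paper gives no written proof of this observation, only Fig.~\ref{fig:snarls}\textbf{C}. Its caption argues, for an $R$--$L$ snarl specifically, that three components arise exactly when no grey edge connects $C_{\text{start}}$ and $C_{\text{end}}$, in which case the two black edges are bridges. Otherwise such a grey edge exists and $C_{\text{start}}=C_{\text{end}}$.

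Your route does not depend on orientation. The bound that each edge deletion adds at most one component, together with the separation property, gives the ``2 or 3'' claim. Your bridge argument then gives the ``if and only if'' for every snarl, not only $R$--$L$ ones. Your remark that $x'$ and $y'$ share a component in the two-component case is the precise form of the caption's bypassing grey edge.

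What the figure conveys and your reading drops is the intended sense of ``nested'': an alternative route around $X$, such as another branch of an enclosing bubble or a cycle. Under your formalisation the final sentence is just a restatement of the case split. That is defensible given how loosely the observation is worded.

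One assertion you leave unjustified is that in the three-component case $X$ is the middle component, with $x'$ and $y'$ in different outer components. It follows in one line from separation: if $(y,y')$ lay on the $x'$ side of the bridge $(x,x')$, then $y$ would not be in the component of $x$.
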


\begin{lemma}\label{lem:3.4}
    Any snarl where its leftmost frontier $sn_1$ is a left node $(L)$ of a black edge, cannot form an ultrabubble (see Fig.~\ref{fig:snarls}\textbf{B}). 
\end{lemma}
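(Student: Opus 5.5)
By Theorem~\ref{thm:3.2}, if $sn_2$ is also a left node the snarl cannot be an ultrabubble. So I only need the $L-R$ case: $sn_1=L$ is a left node, $sn_2=R$ is a right node, and $d(L)<d(R)$, where $d$ is the BFS distance from the root (this ordering is well defined by Corollary~\ref{cor:2}). The goal is to show that the snarl component $X$ contains $C_{\text{start}}$. Lemma~\ref{lem:3.1} then finishes the argument.

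\textbf{Step 1: $L$ lies on the root side of the cut.} Consider a shortest root-to-$L$ path $P$ in the BFS tree. Its last edge is a grey edge $(u,L)$ with $u$ a right node, unless $L$ is the root itself, in which case $X$ trivially contains $C_{\text{start}}$. The removed black edges are $(L,L')$ and $(R',R)$, where $R'$ is the left partner of $R$. $P$ cannot use $(L,L')$, since $L$ appears on $P$ only as the endpoint. $P$ also cannot use $(R',R)$: passing through it would force $d(R)<d(L)$, contradicting $d(L)<d(R)$. Hence $P$ survives the deletion, and the root lies in the same component as $L$. That component is $X$.

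\textbf{Step 2: conclude.} By Step 1, $X$ contains the root and therefore contains $C_{\text{start}}$. Lemma~\ref{lem:3.1} then says the snarl cannot be an ultrabubble. As a more self-contained alternative, the root is a source whose only edge is black. If that edge survives, the root has no grey edges at all and is a tip in $X$. If that edge is removed, the root must be $L$ itself, and $L$ has no grey edge inside $X$, so again there is a tip.

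\textbf{Main obstacle.} The delicate point is the tie-breaking in the definition of $sn_1$. We must make sure that "$sn_1$ is a left node" really yields $d(L)<d(R)$. Equal distances are impossible here by parity, since left nodes have even distance and right nodes odd distance (Corollary~\ref{cor:2}). I also need to check two boundary situations carefully: $L$ being the root, and $R'$ lying on every shortest path to $L$. The second is ruled out because any path through $R$ to $L$ would give $d(R)<d(L)$.
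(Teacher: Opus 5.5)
Your proposal is correct and follows essentially the same route as the paper: after the snarl's black edges are deleted, the root stays connected to $sn_1$, because a left node is entered from the root through a grey edge and not through $sn_1'$. Hence the snarl's component contains $C_{\text{start}}$ and Lemma~\ref{lem:3.1} applies. Your shortest-BFS-path argument also rules out the second deleted edge $(sn_2', sn_2)$, which makes explicit a step the paper only implies through ``$sn_1$ is the closest to the root''; the separate reduction of the $L$--$L$ case to Theorem~\ref{thm:3.2} is harmless but unnecessary, since your Step 1 works there unchanged.
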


\begin{proof}
    The $sn_1$ node is, by definition, the closest to the root of the two frontier nodes of the snarl. Moreover, whereas $sn_1$, since it is a left node, can have one or more incoming gray edges, its opposite node $sn_1^{\prime}$, at the other end of the black edge connected to $sn_1$, cannot have any incoming grey edges as backtracking along black edges is, by definition, not allowed. Consequently, the subgraph formed after deleting the black edge, $(sn_1$, $sn_1^{\prime})$, and that contains node $sn_1$ will also contain the root; that is, the root and $sn_1$ belong to the same connected component, $C_{\text{start}}$. Therefore, by Lemma~\ref{lem:3.1}, such a snarl 
    cannot form an ultrabubble.
\end{proof}

\begin{figure}[H]
  \centering
  \includegraphics[width=\linewidth]{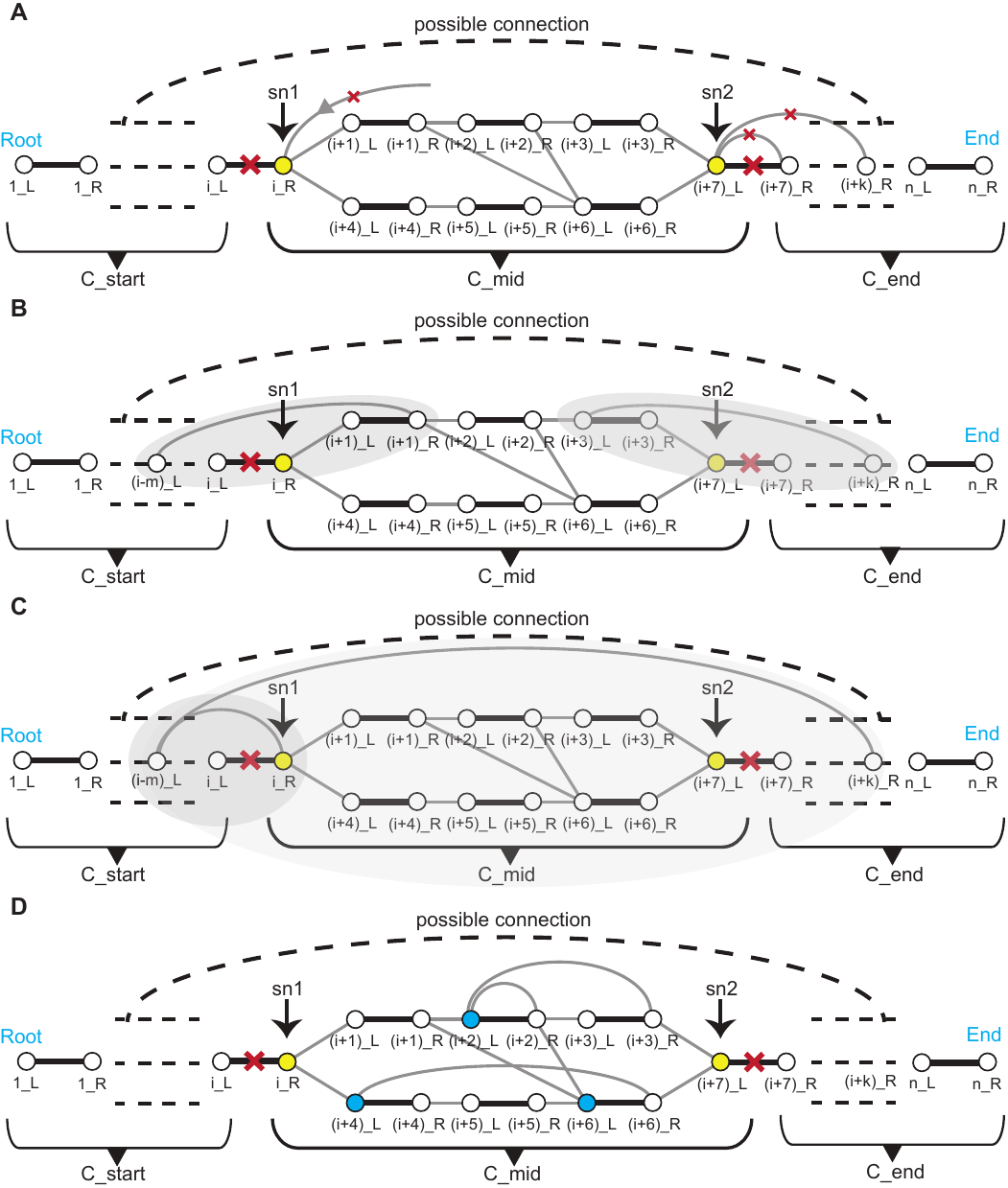} 
 \caption{Cycles that are (\textbf{A-C}) incompatible and (\textbf{D}) compatible with snarls. 
 \textbf{(A)} The frontier nodes $sn_1$, $sn_2$ of an $R-L$ snarl cannot be cycle closing nodes. By definition, $sn_1$ cannot have incoming grey edges. For $sn_2$, an incoming cycle closing grey edge would violate the snarl definition, since it would place $sn_2$ and $sn_2'$ (here, $(i+7)\_L$ and $(i+7)\_R$) in the same component.  
 \textbf{(B-C)} An internal node of an $R-L$ snarl cannot \textbf{(B)} be a cycle closing node of a cycle that belongs partially inside and outside the snarl, or \textbf{(C)} have a grey edge connected to a node outside of the snarl, as this would violate the separation criterion of the snarl definition. 
 \textbf{(D)} Cycle closing nodes in an $R-L$ snarl that exist between the frontier nodes $sn_1$ and $sn_2$ of the snarl represent a cycle nested in the snarl's frontiers. Note that 
 the identified cycle closing nodes will depend on the DFS execution; here, either $(i+4)\_L$ or $(i+6)\_L$ represent the same nested cycle.
 }
  \label{fig:cyclenodes}
  \end{figure}

By Theorem~\ref{thm:3.2} and Lemma~\ref{lem:3.4}, we can directly exclude all $R-R$, $L-L$, and $L-R$ snarls as ultrabubbles.
Consequently, we will only check the ultrabubble criteria for $R-L$ snarls. 

Assume that we have a rooted biedged graph as defined previously with uniquely labeled nodes. By starting in the root and running a modified DFS for directed graphs that follow our traversal rules (see Section 7), 
we find in linear time all tips and all cycle closing nodes of every cycle in the graph, which we mark blue.
Note that because of the traversal rules, only left nodes can be cycle closing nodes.

\begin{theorem}\label{thm:3.5}
An $R-L$ snarl $\{x,y\}$, where $sn_1=x$ is a right node and $sn_2=y$ is a left node, cannot have $x$ or $y$ as cycle closing nodes (see Fig.~\ref{fig:cyclenodes}\textbf{A}).
\end{theorem}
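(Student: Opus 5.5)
The plan is to handle the two frontier nodes separately, using the traversal rules and the separation property of a snarl. Throughout, a node is a \emph{cycle closing node} when some DFS back edge, i.e. an edge closing a directed cycle, enters it.

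For $x=sn_1$ the argument is immediate. By the traversal rules, grey edges go only from $R$ nodes to $L$ nodes, so the only incoming edge of the right node $x$ is its black edge $(x',x)$. Any directed cycle through $x$ must therefore enter $x$ along $(x',x)$. A back edge is either a grey edge ending at an $L$ node or a black edge, and a black edge $(x',x)$ is always a BFS/DFS tree edge, as noted earlier. Hence no back edge ends at $x$, and $x$ is not a cycle closing node. This matches the paper's remark that only left nodes can be cycle closing nodes.

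For $y=sn_2$ I argue by contradiction. Suppose a cycle closing grey edge $(u,y)$ enters $y$, so that $y$ lies on a directed cycle $Z$. Since $y$ is a left node, its only outgoing edge is the black edge $(y,y')$, so $Z$ leaves $y$ through $y'$. The remainder of $Z$ is then a path from $y'$ back to $u$, followed by the edge $(u,y)$. Now remove the two snarl black edges $(x,x')$ and $(y,y')$. I will show that $y$ and $y'$ stay in the same component, which contradicts the separation property, since the component $X$ containing $x,y$ must exclude $y'$. This requires that the path $Z\setminus\{(y,y'),(u,y)\}$ from $y'$ to $u$ avoids the removed edge $(x,x')$. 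It avoids $(y,y')$ automatically, because a cycle repeats no edge.

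The hard case is when $Z$ passes through $(x,x')$, i.e. the cycle goes $y\to y'\to\cdots\to x'\to x\to\cdots\to u\to y$. In that case removing both edges splits $Z$ into two arcs. The arc from $x$ to $y$ lies in $X$, and the arc from $y'$ to $x'$ lies outside $X$. This configuration does not immediately violate separation, so I will use the left-right orientation and the DFS from the root instead. Here $x$ is strictly closer to the root than $y$ (Corollary~\ref{cor:2}), and $X$ contains neither $C_{\text{start}}$ nor $C_{\text{end}}$, otherwise the snarl is already excluded. Every root path into $X$ must then enter through $x'\to x$ and reach $y$ inside $X$. So when the DFS first discovers $y$, the grey edge $(u,y)$ with $u\in X$ lies on a path from $x$; it is a tree or forward edge rather than a back edge. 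The back edge of $Z$ instead closes at the first node of $Z$ discovered by the DFS, which lies on the $x'$ side, i.e. at $x$ or outside $X$. Thus $y$ is not marked as the cycle closing node, although the cycle is not eliminated.

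I expect this step to be the main obstacle. It depends on the exact DFS marking convention and on $X$ being entered only through $x$. It may also need minimality: one would show that such a $Z$ turns $(x,x')$ and $(y,y')$ into a configuration contradicting the nesting of snarls. I will therefore first attempt the DFS-discovery-order argument, and fall back to minimality only if that fails.
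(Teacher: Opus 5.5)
Your treatment of $x$, and of the case where the cycle $Z$ avoids $(x',x)$, is the paper's argument. You are also right that the remaining case is real. The paper's proof deletes only $(y,y')$ and concludes that $y'$, $c$ and $y$ stay connected, which silently assumes the path from $y'$ back to $c$ survives the simultaneous deletion of $(x,x')$. A cycle through $y$ can use both snarl black edges (e.g.\ when they lie on a common cycle of the graph), so separation alone cannot finish the proof and the DFS convention has to be used.

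Your resolution of that case does not yet close it, for three reasons. First, the remark that the back edge of $Z$ closes at the first node of $Z$ discovered tells you nothing about $(u,y)$, because a directed cycle may contain several back edges. Second, the claim that $(u,y)$ is a tree or forward edge because it lies on a path from $x$ does not follow as stated. It may be a cross edge, and what must actually be shown is that $u$ is not a DFS descendant of $y$. Third, you cannot discard snarls whose component contains $C_{\text{start}}$ as ``already excluded''. Lemma~\ref{lem:3.1} only excludes them as ultrabubbles, while the theorem is claimed for every $R-L$ snarl.

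The missing step is to argue with the DFS tree path instead of with $Z$. This removes the case split and needs no minimality.

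First, the root is never in $X$. The only edge entering $X$ is $x'\to x$ and the only edge leaving it is $y\to y'$. So if the root were in $X$, every root-to-$x$ path would have to leave $X$ through $y$ before reaching $x'$. That would make $y$ strictly closer to the root than $x$, contrary to $sn_1=x$.

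Hence the DFS tree path from the root to $y$ enters $X$ through $x'\to x$, so $x$ is a proper tree ancestor of $y$. If $(u,y)$ were a back edge, $u$ would be a tree descendant of $y$, reached via $y\to y'$. The tree path from $y'$ to $u$ consists of proper descendants of $y$, so it contains neither $y$ nor $x$ and therefore uses neither removed black edge. Together with the grey edge $(u,y)$, it connects $y'$ to $y$ after the cut, contradicting $y'\notin X$.
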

\begin{proof}
    First, consider the rightmost frontier $sn_2 = y$.
Assume that $y$ has a cycle closing edge. This means that there is an incoming grey edge from a node $c$ that is further away from the root than $y$ and thus there must exist a path from $y$ to $c$ (otherwise there would be no cycle). Starting with $y$, that path must start with the black edge $(y,y’)$, since there is no other way to leave $y$ because $y$ is a left node and left nodes do not have outgoing grey edges. Therefore, there exists a path from $y’$ to $c$ and by removing the black edge $(y,y’)$, as per the snarl definition, the cycle will be disrupted, but there will still be a grey edge from $c$ to $y$. Consequently, after removing the black edge $yy'$, $y’$, $c$, and $y$ are nodes in the same component,
which violates the separation criterion for snarls.
Second, in an $R-L$ snarl, the leftmost frontier $x$ cannot be a cycle closing node since it is a right node of a black edge, which according to the definition cannot have incoming grey edges. 
\end{proof}

By Theorem~\ref{thm:3.5}, we have Lemma~\ref{lem:3.6}.

\begin{lemma}\label{lem:3.6}
If a cycle closing node is present in an $R-L$ snarl, 
then its corresponding cycle is completely nested in the snarl’s subgraph (see Fig.~\ref{fig:cyclenodes}\textbf{B,C}).     
\end{lemma}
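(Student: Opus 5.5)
The plan is to argue by contradiction using the separation property of the snarl together with Theorem~\ref{thm:3.5}. Let $\{x,y\}$ be an $R-L$ snarl with $sn_1=x$ (a right node) and $sn_2=y$ (a left node), and let $X$ be the component containing $x$ and $y$ after the black edges $(x,x')$ and $(y,y')$ are deleted. Suppose some node $z\in X$ is a cycle closing node. Then $z$ is a left node with an incoming grey edge $(c,z)$, and there is a directed cycle $Z$ through $(c,z)$. We must show that every node and edge of $Z$ lies in $X$, and that $Z$ uses neither $(x,x')$ nor $(y,y')$.

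\textbf{Step 1: $Z$ cannot leave $X$ through a grey edge.} The only edges of the original graph joining $X$ to the rest of the graph are the two deleted black edges $(x,x')$ and $(y,y')$. A grey edge with one endpoint in $X$ and the other outside would survive the deletion and put both endpoints in the same component, contradicting separation (Fig.~\ref{fig:cyclenodes}\textbf{C}). Hence every edge of $Z$ with one endpoint in $X$ either has both endpoints in $X$ or is one of $(x,x')$, $(y,y')$.

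\textbf{Step 2: $Z$ uses neither of the two boundary black edges.} If $Z$ used $(y,y')$, then $y$ would lie on a directed cycle. The cycle enters $y$, a left node, through an incoming grey edge, and leaves through $(y,y')$, since a left node has no outgoing grey edge. We need this to contradict separation, as in the proof of Theorem~\ref{thm:3.5}. The edge entering $y$ is a grey edge from some node $c'$, and $c'$ is reachable from $y'$ along $Z$. Since $Z$ cannot cross the cut except through the two black edges, the portion of $Z$ from $y'$ back to $y$ must cross into $X$. It cannot use $(y,y')$ again, because a cycle repeats no edges, so it would have to enter $X$ through $(x',x)$. Then $Z$ passes through $x$ and leaves $X$ along $(y,y')$, and the path from $x$ to $y$ inside $X$ would have to be checked against the orientation. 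This is the step I expect to be the main obstacle. I would first try to apply the argument of Theorem~\ref{thm:3.5} to $y$ directly: the grey edge $(c',y)$ together with a path $y'\to c'$ that avoids $(y,y')$ puts $y'$ and $y$ in one component. If that path is forced through $(x',x)$, I would instead use the BFS distance ordering, namely that $x$ is nearer the root than $y$. From that ordering I would argue that a cycle re-entering through $x$ forces $x$ to be reachable from $y'$, making the root side and the $y'$ side connected only through $X$. This conflicts with Lemma~\ref{lem:3.4}'s reasoning that $C_{\text{start}}$ lies outside $X$. The edge $(x,x')$ is handled symmetrically: $x$ has no incoming grey edge, so a cycle through $x$ must arrive through $(x',x)$.

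\textbf{Step 3: conclude.} By Steps 1 and 2, every edge of $Z$ incident to $X$ stays inside $X$. Since $z\in X$ and $Z$ is connected, all of $Z$ lies in $X$, so $Z$ is completely nested in the snarl's subgraph, as claimed (Fig.~\ref{fig:cyclenodes}\textbf{B}). Theorem~\ref{thm:3.5} additionally ensures the frontier nodes themselves are not the cycle closing nodes, which is consistent with this picture.
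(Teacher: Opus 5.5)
Your Step~2 is where the argument fails, and it cannot be repaired. For an arbitrary directed cycle $Z$ through the closing edge $(c,z)$, its claim is false. Take black edges $r_L\to r_R$, $a_L\to a_R$, $b_L\to b_R$, $c_L\to c_R$, $d_L\to d_R$, $e_L\to e_R$ and grey edges $r_R\to a_L$, $a_R\to b_L$, $a_R\to c_L$, $b_R\to c_L$, $b_R\to d_L$, $c_R\to b_L$, $d_R\to a_L$, $d_R\to e_L$, with root $r_L$ and sink $e_R$.

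\begin{itemize}
\item Deleting $(a_L,a_R)$ and $(d_L,d_R)$ leaves $X=\{a_R,b_L,b_R,c_L,c_R,d_L\}$ as a component that avoids $a_L$ and $d_R$.
\item Both black edges inside $X$ lie on the undirected cycle $b_L b_R c_L c_R$, so minimality holds.
\item The BFS distances of $a_R$ and $d_L$ are $3$ and $6$, so $\{a_R,d_L\}$ is an $R-L$ snarl.
\item A DFS that goes $a_R\to b_L\to b_R\to c_L\to c_R$ finds the back edge $(c_R,b_L)$, so $b_L\in X$ is a cycle closing node.
\end{itemize}

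Yet $b_L\to b_R\to d_L\to d_R\to a_L\to a_R\to c_L\to c_R\to b_L$ is a simple directed cycle through that very edge. It leaves $X$ through $(y,y')$ and re-enters through $(x',x)$.

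This is exactly the wrap-around case you flagged, and neither of your escapes works. The path $y'\to c'$ is forced through the deleted edge $(x',x)$, so the argument of Theorem~\ref{thm:3.5} gives nothing. The BFS ordering gives no contradiction either, because here $y'$ and the root are connected outside $X$ via $d_R\to a_L$; this is just the two-component case of Observation~\ref{obs:3.3}. The paper's own proof also relies on separation alone: it asserts the excursion would place $y'$ or $x'$ in the snarl's component. So it does not settle this case either, and separation by itself cannot prove the lemma.

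The missing idea is to fix the \emph{corresponding cycle} as the one the DFS produces, namely the back edge $(c,z)$ plus the DFS tree path from $z$ to $c$, and then argue with the DFS tree.

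\begin{itemize}
\item Since $x$ is nearer the root than $y$, the root lies outside $X$. A root inside $X$ could reach $x'$, and hence $x$, only by leaving through $(y,y')$, which would force $d(x)>d(y)$.
\item Hence the first node of $X$ the DFS discovers is $x$, entered through $(x',x)$.
\item Every node of $X$ is reachable from $x$ inside $X$, because a root path can enter $X$ only through $(x',x)$. By the white-path theorem, every node of $X$ is therefore a descendant of $x$, in particular $z\neq x$.
\item The tree path from $z$ to $c$ thus contains neither $x$ nor $(x',x)$. If it ever left $X$ through $(y,y')$ it could not return, and $c\in X$ by your Step~1. So this cycle lies entirely in $X$.
\end{itemize}

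This restricted reading is all the algorithm needs. Any cycle lying inside $X$ contains a back edge, and that edge's target is a cycle closing node in $X$.
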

\begin{proof}
    Let $c$ be a cycle closing node between the frontier nodes $x$,$y$ of an $R-L$ snarl. Assume that the corresponding cycle is not completely nested in the subgraph of the snarl. Then a path of that cycle has to be outside of the snarl subgraph, either to the snarl's right or left. Since the snarl is connected to the rest of the graph through $x$ and $y$, 
    there is at least one node $d$ that participates in this cycle with $d=y’$ or $d=x'$, both cases being possible. But then there will be a path that connects $d$ with $c$. Consequently, if $d$ lies on the right or the left of the snarl, then $y’$, $d$, $c$, and $y$ or $d$, $x’$, $c$, and $x$, respectively, will be in the same component. 
    Both cases violate the separation criterion of the snarl definition.  
\end{proof}

\begin{lemma}\label{lem:3.7}
    A tip that is in an $R-L$ snarl’s subgraph can be any node of the subgraph except a descendant of $sn_2$.
\end{lemma}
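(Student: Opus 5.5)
The claim has two halves. The permissive half says that a tip may occur at any node of the snarl's subgraph that is not a descendant of $sn_2$; this needs only an example. The restrictive half says that no descendant of $sn_2$ inside the snarl can be a tip, and this is where the work lies. I will read "descendant" with respect to the DFS/BFS tree rooted at the graph's root and obeying the traversal rules. In that tree the only tree edge leaving $sn_2=y$ (a left node) is the black edge $(y,y')$. So every proper descendant of $y$ is reached through $y'$.

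For the permissive half I exhibit a dead end. Take an $R$--$L$ snarl, pick an internal $L$ node $u$ on a path from $sn_1$ to $sn_2$, and hang off the black edge $(u,u')$ whose right node $u'$ has no outgoing grey edge. Then $u'$ is a tip lying inside the snarl. Its position relative to the root is arbitrary except that it is never beyond $sn_2$. The same construction works for a left node with no incoming grey edge placed next to $sn_1$, so any internal node can play the role.

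For the restrictive half, let $z$ be a proper descendant of $y=sn_2$. Then the tree path root $\to\cdots\to y\to y'\to\cdots\to z$ exists. I first show that $z$ lies outside the snarl's component $X$. The plan is to argue that every edge of the path $y'\to\cdots\to z$ survives the deletion of the two snarl black edges. The black edge $(x,x')$ cannot occur on this path. If it did, then $x$ would be reachable from $y'$, and since $X$ is entered only through $x$ and $y$, this would place $x'$ or $y'$ in $X$, contradicting separation. This is the same style of argument as in Lemma~\ref{lem:3.6} and Theorem~\ref{thm:3.5}. Hence $z$ lies in the component of $y'$, which by the separation property is disjoint from $X$. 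Therefore $z\notin X$, and a descendant of $sn_2$ is never a node of the snarl's subgraph, so it cannot be a tip of the snarl. The case $z=y$ itself is handled by noting that $y$ has an incoming grey edge from inside $X$, because $X$ is connected and $y$'s black edge is removed. So $y$ is not a tip.

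The main obstacle is the step excluding $(x,x')$ from the descendant path. It also needs care when $C_{\text{start}}=C_{\text{end}}$, i.e.\ the two-component case of Observation~\ref{obs:3.3}, where $y'$ and $x'$ may share a component. I would handle it by noting that even then the separation property only requires that $X$ exclude both $x'$ and $y'$, and any path from $y'$ into $X$ must enter through $x$ or $y$. Entering through $y$ is impossible, since it would give $y$ a cycle-closing edge, contradicting Theorem~\ref{thm:3.5}. Entering through $x$ requires traversing $(x',x)$, which is against the traversal direction, since black edges go $L\to R$ and $x$ is the right node.
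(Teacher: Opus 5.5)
Your plan for the restrictive half is the right one, and it is the form of the lemma that Algorithm~\ref{alg:algo1} actually needs. A proper BFS-tree descendant $z$ of $y=sn_2$ hangs below $y'$. If the tree path $y'\to\cdots\to z$ avoids both deleted black edges, then $z$ lies in the component of $y'$, hence outside $X$. The gap is in the step you flag as the main obstacle, excluding $(x',x)$ from that path: neither of your arguments establishes it. The first is a non sequitur. A path $y'\to\cdots\to x'\to x$ only shows that $x'$ and $y'$ share a component after the deletion, which is exactly the two-component situation $C_{\text{start}}=C_{\text{end}}$ and puts neither of them in $X$. The second gets the orientation backwards. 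Since $x=sn_1$ is the right node and $x'$ the left node of its black edge, $(x',x)$ runs $L\to R$, in the permitted direction. It is precisely the edge through which every root path enters the snarl, so the traversal rules do not forbid a path from $y'$ back around to $x'$ and into $x$.

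The missing observation is that you should argue in the tree, not about reachability in the graph: $x$ simply cannot be a tree descendant of $y'$. In the BFS tree, depth equals root distance. We have $d(x)<d(y)$ by the choice of $sn_1$, and every descendant of $y'$ is deeper than $y$. So $x$ is not below $y'$, and the tree edge $(x',x)$ into $x$ cannot lie on the path from $y'$ to $z$. A version that also works for a DFS tree: the root is not in $X$, since otherwise $x$ could only be reached via $y\to y'\to\cdots\to x'\to x$, forcing $d(x)>d(y)$. Then, as $(x',x)$ is the only edge entering $X$, $x$ dominates $y$ and is a tree ancestor of $y$. With this in place the rest of your argument goes through, including the separate treatment of $z=y$.

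There are two smaller problems in the permissive half. First, as worded $u'$ is not a tip: a path from $sn_1$ to $sn_2$ through the internal left node $u$ must leave $u'$ by a grey edge. The intended example is a dead-end black edge attached to $X$ by a grey edge from a right node of $X$. Second, your left-node variant is inadmissible. A left node other than the root with no incoming grey edge is a second source, unreachable from the root.

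That last situation is exactly what the paper's own proof addresses. It argues that a tip feeding into $sn_2$ along an incoming grey edge could not be reached from the root, violating the unique-root assumption (Observation~\ref{obs:3.7.1}). Your route instead shows directly that no node of $X$ other than $sn_2$ is a descendant of $sn_2$ in the tree used for the LCA queries.
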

\begin{proof}
    In an $R-L$ snarl $\{sn_1, sn_2\}$, $sn_2$ can only have incoming grey edges since it is a left node. 
    Assume one of those incoming grey edges is on a path originating from a tip 
    – that is, a node with no incident grey edges. Then this tip and the corresponding path to $sn_2$ is not traversable from the root as  we demand, since this would imply an outgoing grey edge from a left node ($sn_2$), and thus our assumptions about the root are violated (see Fig.~\ref{fig:dagroot}\textbf{A}). If such a tip is present, we have Observation~\ref{obs:3.7.1} below.
\end{proof}

\begin{observation}\label{obs:3.7.1}
    A tip connected with a grey edge to a left node is a potential root for the adapted sequence orientation of the biedged graph. 
    Consequently, the graph has more than one potential root (see Fig.~\ref{fig:dagroot}\textbf{B}; we discuss this real case scenario in Sections 4 and 7.) It is obvious that $sn_1$, since it is a right node, can have an outgoing grey edge 
    that leads to a tip, which would be a sink in the biedged graph. 
\end{observation}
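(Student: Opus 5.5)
The statement to justify is Observation~\ref{obs:3.7.1}. It makes two claims: (a) a tip grey-adjacent to a left node is a potential root, so the graph then has more than one potential root; and (b) $sn_1$, being a right node, may have an outgoing grey edge leading to a tip, which is a sink. I would argue both directly from the traversal rules of Section~3, reusing the reasoning in the proof of Lemma~\ref{lem:3.7}. No heavy machinery is needed, since the claims concern local in- and out-degrees under the directed rules.

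\textbf{Part (a).} Let $t$ be a node joined by a grey edge to a left node $\ell$. Grey edges are traversable only from an $R$ node to an $L$ node, so the grey edge touching $t$ must run $t\to\ell$, with $t$ a right node. A right node has no incoming grey edges, so its only possible in-edge is its black edge $(t',t)$. In the configuration of Lemma~\ref{lem:3.7}, $t$ is a tip with respect to the black-edge-deleted subgraph, that is, a source with no other entry. In that case nothing reaches $t$ from the designated root except through a forbidden backward traversal, so $t$ is unreachable from the root. This is exactly the contradiction used in Lemma~\ref{lem:3.7}.

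To restore reachability one must either reverse-complement the orientation around $t$ or add $t$, or its black partner, as a start, as described in Section~7. Either way $t$ serves as a source from which part of the graph is reachable, so it is a potential root. It is distinct from the chosen root, because the root reaches every node and $t$ is not reached. Hence the graph admits more than one potential root.

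\textbf{Part (b).} This is a realizability statement. Since $sn_1$ is a right node, the traversal rules permit outgoing grey edges from it, for example an edge $sn_1\to w$ with $w$ a left node. If $w$'s only grey edge is this incoming one and its black edge leads to a node $w'$ with no outgoing grey edges, then $w'$ has no out-edges at all and is therefore a sink. Such a dangling branch is a tip in the sense of the paper, and it lies inside the snarl between $sn_1$ and $sn_2$. It violates none of the separation conditions, since it is attached only through $sn_1$. A small explicit example suffices, matching Fig.~\ref{fig:dagroot}.

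\textbf{Main obstacle.} The delicate point is the paper's own inconsistency: a tip is defined as having no incident grey edges, yet (a) speaks of a tip connected by a grey edge. I would resolve this by reading ``tip'' in (a) as the endpoint of a dead-end path, with no grey edges beyond its single attaching edge, which is the sense in which it becomes a source or sink once that edge is accounted for. I would state that reading explicitly before running the argument above.
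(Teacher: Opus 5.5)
Your route is the paper's. The paper justifies this observation only through the proof of Lemma~\ref{lem:3.7}: a tip feeding a path into the left node $sn_2$ cannot be reached from the root, so it is an extra source. It treats the claim about $sn_1$ as obvious. Your part (b) is a correct explicit instance of that remark; you verify separation but not minimality, though the paper does not either.

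The gap is in part (a). You take the tip to be the right node $t$ carrying the grey edge $t\to\ell$, and you correctly note that its only in-edge is the black edge $(t',t)$. You then assert that nothing from the root reaches $t$, and that does not follow: $t$ is reached as soon as $t'$ is, and you never exclude incoming grey edges at $t'$. Under your reading of ``tip'', $t$ has in-degree one and out-degree one. It is therefore neither a source nor a sink, and not a potential root at all. Calling it a tip ``with respect to the black-edge-deleted subgraph'' does not help, because the snarl deletes only the black edges at $sn_1$ and $sn_2$. And if $t=sn_1$, it is certainly reachable from the root via $sn_1'$.

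The missing step is the correct identification of the tip, which is the one the paper uses: in Fig.~\ref{fig:dagroot}A the tip that is a candidate root is a left node of in-degree $0$. Keep the definition of a tip as a node with no incident grey edges, and read ``connected with a grey edge to a left node'' as a statement about its segment. So the tip is the left node $t'$, and the grey edge leaves from its partner $t$. Under the traversal rules a left node can only be entered through grey edges, so $t'$ has in-degree zero. Hence no other node, in particular not the designated root, reaches it, while $t'\to t\to\ell\to\cdots$ is reachable from it; it is a source and thus a potential root.

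Your sentence ``it is distinct from the chosen root, because the root reaches every node and $t$ is not reached'' is also self-contradictory. The existence of such a second in-degree-zero node refutes the unique-root assumption; this is the contradiction in Lemma~\ref{lem:3.7}. That refutation, not a distinctness argument made under the assumption, is why the graph has more than one potential root. It is also why the paper defers to the artificial-root construction of Section~\ref{sec:s2}.
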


An ultrabubble is an $R-L$ snarl $\{sn_1, sn_2\}$ that contains no cycles or tips. By Lemma~\ref{lem:3.6}, a cycle is present in an $R-L$ snarl subgraph if and only if a cycle closing node lies between $sn_1$ and $sn_2$. Moreover, by Lemma~\ref{lem:3.7}, a tip in an $R-L$ snarl  lies between $sn_1$ and $sn_2$, since it can be a descendant of $sn_1$ but not a descendant of $sn_2$. Therefore, to check if an $R-L$ snarl is an ultrabubble it is sufficient to check if there are tips or cycle closing nodes that lie between $sn_1$ and $sn_2$. Thus, both tips and cycle closing nodes will be checked for the same property, so we place both in a set labeled ftip. 
Consequently, 
an $R-L$ snarl is not an ultrabubble if and only if at least one node $t\in \text{ftip}$ lies between $sn_1$ and $sn_2$ in the subgraph of the snarl. In other words, $t$ is a descendant of $sn_1$ but not a descendant of $sn_2$ (see Fig.~\ref{fig:ftip}\textbf{B}).

Given the above conclusions, our approach to deal with the ultrabubble classification of a given $R-L$ snarl is based on the lowest common ancestor (LCA). 
Specifically, a node $t$ is located within an $R-L$ snarl $\{sn_1, sn_2\}$ if $LCA(t, sn_1) = sn_1$ and $LCA(t, sn_2) \ne sn_2$, where $LCA(x, y)$ is the LCA of $x$ and $y$
(see Fig.~\ref{fig:ftip}).
Using LCA to test whether a snarl is an ultrabubble therefore involves $O(n)$ LCA queries in worst case.

\begin{figure}[H]
  \centering
  \includegraphics[width=\linewidth]{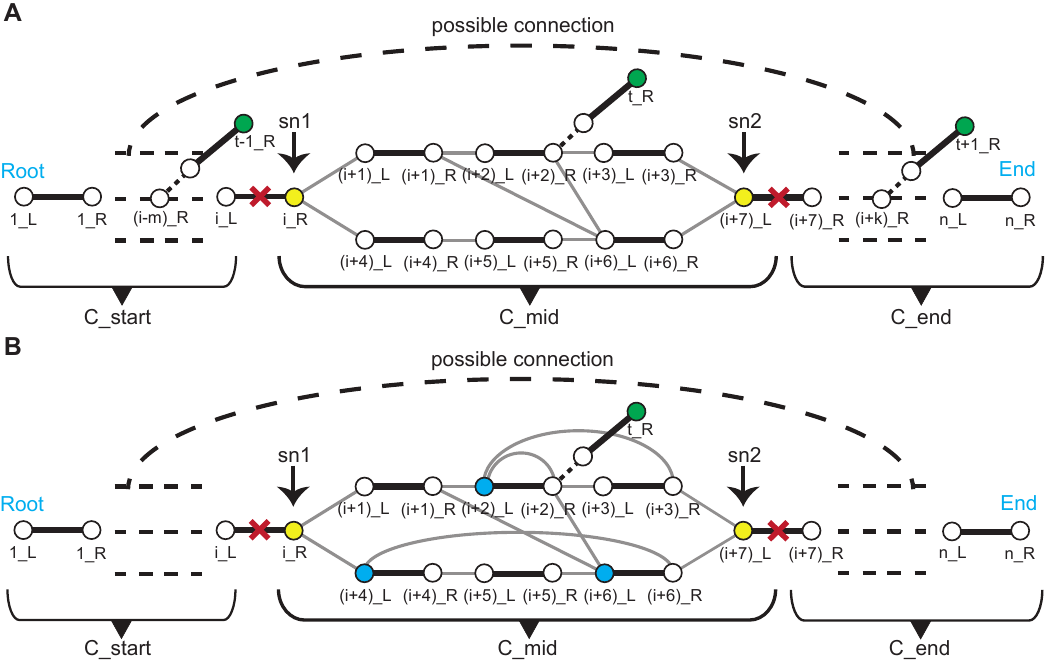} 
  \caption{Lowest common ancestor (LCA) queries identify whether a node is located within an $R-L$ snarl (yellow).
  \textbf{(A)} For the three tips (green) $(t-1)\_R$, $t\_R$, and $(t+1)\_R$ located, respectively, to the left, within, and to the right of the snarl, the following hold:  
  $LCA((t-1)\_R, sn_1)\neq sn_1$, $LCA(t\_R, sn_1)=sn_1$, $LCA(t\_R, sn_2)\neq sn_2$, and $LCA((t+1)\_R, sn_2)= sn_2$.
  \textbf{(B)} The same set of LCA checks apply to both tips (green) and cycle closing nodes (blue). 
 }
  \label{fig:ftip}
 \end{figure}

Whereas the LCA is well defined in directed acyclic graphs (DAG) \cite{28}, a given biedged graph is not necessarily a DAG. 
In general directed graphs, the LCA queries can be answered efficiently by first contracting the graph’s strongly connected components (SCC) to build a DAG called the condensation DAG. The queries are then answered on this DAG, and each result can be mapped back to a node of the original graph using a fixed representative rule; see Section 7.2 for details.
However, as we are only interested in determining if a given node $t$ is located between the frontiers $sn_1$ and $sn_2$ of a given $R-L$ snarl, we can instead run the LCA queries in a BFS tree from the root of the biedged graph (Theorem~\ref{thm:3.8}).

\begin{theorem}\label{thm:3.8}
Given a biedged graph $B$ with a unique root and corresponding BFS tree $B_t$, an $R-L$ snarl $\{sn_1, sn_2\}$, and a node $t$ located within the snarl, the LCA queries between $t$ and $sn_1$ and $t$ and $sn_2$ in $B_t$ will answer if $t$ is between $sn_1$ and $sn_2$ and not more left and outside the frontier than $sn_1$ (see Fig.~\ref{fig:lca}\textbf{B}).
\end{theorem}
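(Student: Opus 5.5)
The plan is to show that, in the BFS tree $B_t$, the node set of the snarl subgraph $X$ equals the subtree of $sn_1$ minus the subtree of $sn_2$. That is, a node $t$ lies in $X$ if and only if $LCA(t,sn_1)=sn_1$ and $LCA(t,sn_2)\neq sn_2$ in $B_t$. Since in a tree $LCA(t,u)=u$ is equivalent to $u$ being an ancestor of $t$, the statement reduces to a characterization of tree ancestry. I would therefore prove the two inclusions separately. Throughout I would use the facts already established: every black edge is a BFS tree edge; $sn_1=x$ is a right node, so its only incoming edge is the black edge $(x',x)$; $sn_2=y$ is a left node, so its only outgoing edge is the black edge $(y,y')$; and by the separation property, every edge between $X$ and the rest of the graph is one of $(x',x)$ and $(y,y')$.

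First inclusion: every node $t\in X$ is a descendant of $sn_1$ and not a descendant of $sn_2$ in $B_t$.
\begin{itemize}
\item Take the BFS tree path from the root to $t$. By Lemma~\ref{lem:3.4} and the $R$-$L$ orientation, the root does not lie in $X$, because the component of $sn_1$'s black-edge partner contains $C_{\text{start}}$.
\item The tree path therefore has to enter $X$. Under the traversal rules, the only edge directed into $X$ is $(x',x)$, since $(y,y')$ points out of $X$. Hence the path passes through $x=sn_1$.
\item Suppose $t$ were a descendant of $sn_2$. The tree path from $sn_2$ to $t$ would start with $(y,y')$, the only outgoing edge of $y$, and so leave $X$. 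It would then have to re-enter $X$, and the only way in is through $x$. A tree path cannot revisit $x$, which is already an ancestor of $y$ on this path, so we get a contradiction.
\item The exception is $t=sn_2$ itself. Here I would adopt the convention that the frontier $sn_2$ counts as the boundary: its ftip status is already excluded by Theorem~\ref{thm:3.5} and Lemma~\ref{lem:3.7}.
\end{itemize}

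Second inclusion: every descendant of $sn_1$ that is not a descendant of $sn_2$ lies in $X$.
\begin{itemize}
\item Take $t$ with $sn_1$ as an ancestor, and follow the tree path $sn_1=v_0,\dots,v_k=t$.
\item If the path ever leaves $X$, it must do so along an edge leaving $X$. Leaving through $(x,x')$ is impossible under the traversal rules, since that edge points into $x$. Hence it must leave via $(y,y')$, which makes $y$ an ancestor of $t$, contrary to assumption.
\item So the whole path stays inside $X$. This is exactly the "not more left and outside the frontier than $sn_1$" clause: nodes not below $sn_1$ in $B_t$ are excluded, and these include everything to the left.
\end{itemize}

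The main obstacle I expect is the first inclusion, namely justifying that the root lies outside $X$ and that all entries into $X$ go through $x$. This needs care, because the separation argument in the paper is phrased for undirected components, while BFS uses directed edges. I would handle it by noting that any directed edge crossing the cut is in particular an undirected edge between components. By the snarl's separation property, the only such edges are the two removed black edges, and their directions are fixed by the $L$/$R$ rules. Combined with the reachability of every node from the root, this gives the result.
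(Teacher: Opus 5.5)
Your argument is correct and shares its key mechanism with the paper's. Once the traversal rules fix the edge directions, the separation property leaves $(sn_1',sn_1)$ as the only edge entering $X$ and $(sn_2,sn_2')$ as the only edge leaving it, so any tree path of $B_t$ into $X$ passes through $sn_1$.

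The structure of the proof is different, though. The paper argues by contradiction on one quantity: if $c_1=LCA_{B_t}(sn_1,t)$ lay strictly left of $sn_1$, the tree path from $c_1$ to $t$ would need a third edge into $X$. It then handles $sn_2$ only by saying that $c_2=LCA_{B_t}(sn_2,t)$ lies in $X$, which by itself does not exclude $c_2=sn_2$.

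You instead prove an exact identity. Writing $\mathrm{sub}(v)$ for the node set of the subtree of $B_t$ rooted at $v$, you show $X\setminus\{sn_2\}=\mathrm{sub}(sn_1)\setminus\mathrm{sub}(sn_2)$. Your re-entry argument (a tree path leaving through $(sn_2,sn_2')$ would have to pass $sn_1$ a second time) is what actually gives $LCA_{B_t}(t,sn_2)\neq sn_2$ for $t\in X$. Your second inclusion gives the converse, which the paper never states. Algorithm~\ref{alg:algo1} needs that converse so that $ftip$ nodes outside the snarl are never counted against it. Your version therefore justifies Corollary~\ref{cor:3.9} in both directions.

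Two justifications should be tightened.

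First, that the root lies outside $X$ does not follow from Lemma~\ref{lem:3.4}, which is about a left $sn_1$. Nor does it follow from reachability and edge directions alone: a root inside $X$ could still reach $sn_1'$ by exiting through $sn_2$. Use the $R$-$L$ ordering instead, writing $d$ for distance from the root. A shortest root-to-$sn_1$ path ends with $(sn_1',sn_1)$, and every earlier node has distance below $d(sn_1)<d(sn_2)$. So the prefix up to $sn_1'$ contains neither $sn_1$ nor $sn_2$, hence avoids both deleted edges, and puts the root in the component of $sn_1'$, which is not $X$.

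Second, for $sn_2\notin ftip$, the direct reason it is not a tip is this: a frontier node with no grey edge would be isolated once its black edge is deleted, contradicting that $X$ is connected and contains both frontiers. Theorem~\ref{thm:3.5} then rules out its being cycle-closing.
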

\begin{proof}
    Every $R-L$ oriented snarl in a biedged graph $B$ clearly has a unique entry node and a unique exit node (frontier nodes $sn_1$ and $sn_2$, respectively) because of the definition and the traversal rules.     
    Assume that the LCA query $LCA(sn_1, t) = c_1$
    in the BFS tree $B_t$ of the graph,
    and that $c_1$ is located more to the left than $sn_1$.
    As $t$ is a descendant of $sn_1$ in the snarl's subgraph, this assumption means that there are two distinct paths $P_1$ and $P_2$ from $c_1$ to $sn_1$ and $t$, respectively, and that $sn_1$ is not part of $P_2$.
    $P_1$ must include $sn_1'$, as the only path to $sn_1$ is through the black edge $(sn_1', sn_1)$ (by the traversal rules and Theorem~\ref{thm:3.5}). 
    Moreover, $P_2$ cannot include $(sn_2, sn_2')$, as this edge cannot be used to enter the snarl $\{sn_1, sn_2\}$ as it has the opposite direction.
    Consequently, $P_2$ must include some other edge that connects $t$ to $c_1$, which means that $sn_1'$ and $sn_1$ are still part of the same component, violating the separation criterion of the snarl definition. 
    The same logic holds for LCA queries between $t$ and $sn_2$, as both are descendants of $sn_1$; that is, for $LCA(sn_2, t) = c_2$, $c_2$ must be part of the snarl's subgraph, unless there is some other edge beside $(sn_1', sn_1)$ and $(sn_2, sn_2')$ connecting $\{sn_1, sn_2\}$ to the rest of $B$ (and thereby $c_2$ to $t$), thereby violating the separation criterion.
\end{proof}
\begin{figure}[H]
  \centering
  \includegraphics[width=\linewidth]{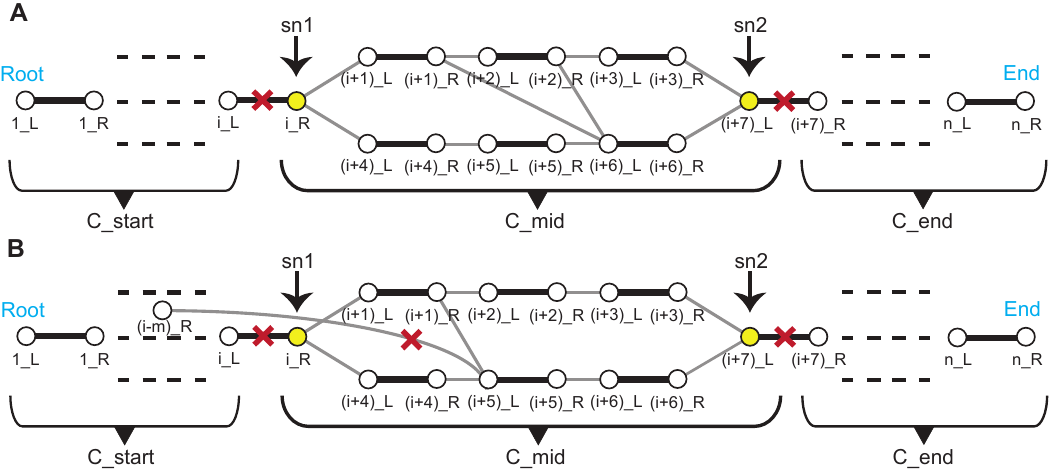} 
  \caption{
  Lowest common ancestor (LCA) queries in a breadth first search (BFS) tree ($LCA_{B_t}$) identify whether a node is located within an $R-L$ snarl (yellow).
  \textbf{(A)} 
  $LCA_{B_t}$ can give a result that is closer to the root than that of the LCA in the biedged graph ($LCA_B$); for example, $LCA_{B_t}((i+3)\_L, (i+6)\_L)=(i+1)\_R$, whereas $LCA_B((i+3)\_L, (i+6)\_L)=(i+2)\_R$.  
  \textbf{(B)} $LCA_{B_t}$ can never give an answer higher than $sn_1$, as such an answer would violate the separation criterion of the snarl.
  }
  \label{fig:lca}
\end{figure}

By Theorem~\ref{thm:3.8}, we have Corollary~\ref{cor:3.9}.

\begin{corollary}\label{cor:3.9}
For any $R-L$ snarl $\{sn_1, sn_2\}$ of a given biedged graph $B$, it is sufficient to use LCA queries in the BFS tree $B_t$ of the graph to determine if a node in the $ftip$ set is present within $\{sn_1, sn_2\}$, despite that LCA queries in the BFS tree $B_t$ may give different answers than LCA queries in $B$. 
\end{corollary}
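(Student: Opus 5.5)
The corollary follows from Theorem~\ref{thm:3.8}; what remains is to show that the tree-based test is correct in both directions. I would fix an $R-L$ snarl $\{sn_1,sn_2\}$ with subgraph $X$ and a node $t\in\text{ftip}$. The claim is that $t$ lies within the snarl, in the sense used above ($t$ is a descendant of $sn_1$ but not of $sn_2$ in $B$), if and only if $LCA_{B_t}(t,sn_1)=sn_1$ and $LCA_{B_t}(t,sn_2)\neq sn_2$. The plan is to replace the LCA conditions by ancestor relations in $B_t$, since $LCA_{B_t}(t,v)=v$ exactly when $v$ is an ancestor of $t$ in the tree. Then I would compare ancestry in $B_t$ with reachability inside $X$.

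The first step is the structural fact used in the proof of Theorem~\ref{thm:3.8}. Every path from the root into $X$ must enter through the black edge $(sn_1',sn_1)$. The edge $(sn_2,sn_2')$ points outward, and any other entry edge would violate the separation criterion. Consequently, every root-to-$t$ path in $B$ for $t\in X$ passes through $sn_1$. This holds in particular for the BFS tree path, so $sn_1$ is a tree ancestor of every node of $X$. This gives the first LCA condition for all $t\in X$. Conversely, for $t\notin X$ lying to the left, the tree path to $t$ cannot leave $X$ once it has entered, except through $sn_2$. Hence $sn_1$ is not an ancestor of $t$, and $LCA_{B_t}(t,sn_1)\neq sn_1$.

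The second step handles $sn_2$. Nodes reachable only through $(sn_2,sn_2')$ lie to the right of the snarl, outside $X$, and have $sn_2$ as a tree ancestor. For $t\in X\setminus\{sn_2\}$, I must show that $sn_2$ is not a tree ancestor of $t$. Since $sn_2$ is a left node, its only outgoing edge is $(sn_2,sn_2')$, which leaves $X$. Reaching $t$ from there would require re-entering $X$ through some edge other than $(sn_1',sn_1)$, which contradicts separation. The only remaining route back is through $sn_1$, and that would force $sn_1$ to be a descendant of $sn_2$, contradicting $sn_1$ being to the left. Together with Lemma~\ref{lem:3.7} and Lemma~\ref{lem:3.6}, every tip or cycle-closing node relevant to the ultrabubble test lies in $X$ and is detected by the two queries.

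The main obstacle is the mismatch illustrated in Fig.~\ref{fig:lca}\textbf{A}: $LCA_{B_t}$ may differ from $LCA_B$ because the BFS tree discards non-tree edges. I would handle this by arguing only about whether the LCA equals $sn_1$ or $sn_2$, never about its exact value. That reduces everything to tree-ancestry of the two frontier nodes, which the separation argument controls regardless of which BFS tree is chosen.
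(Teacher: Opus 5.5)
Your argument is correct. It rests on the same fact the paper uses in proving Theorem~\ref{thm:3.8}: by separation and the traversal rules, $(sn_1',sn_1)$ is the only edge into $X$ and $(sn_2,sn_2')$ is the only edge out.

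The organization, however, differs from the paper's. The paper obtains the corollary simply by citing Theorem~\ref{thm:3.8}. That theorem's proof argues by contradiction about where the tree LCA $c_1$ lies: a path from $c_1$ to $t$ avoiding $sn_1$ would be a second entry into the snarl. It then says ``the same logic'' places $LCA_{B_t}(sn_2,t)$ inside the snarl. You instead use that $LCA_{B_t}(t,v)=v$ if and only if $v$ is a tree ancestor of $t$. Only ancestry of the two frontiers then matters, so the discrepancy between $LCA_{B_t}$ and $LCA_B$ is harmless by construction. From there:
\begin{itemize}
\item $sn_1$ dominating $X$ gives the first test;
\item the BFS-depth argument shows $sn_2$ is not a tree ancestor of any $t\in X\setminus\{sn_2\}$;
\item a simple tree path that enters $X$ can leave only via $sn_2$, so no node outside $X$ passes both tests.
\end{itemize}
This gives you the exact equalities Algorithm~\ref{alg:algo1} tests. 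The paper only locates $c_2$ inside the snarl, which does not rule out $c_2=sn_2$. You also get the no-false-positive direction for ftip nodes outside $X$, which Theorem~\ref{thm:3.8} never addresses, since it assumes $t$ lies in the snarl. The paper's route is shorter only because it leans on that theorem.

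Three small repairs are needed.
\begin{enumerate}
\item The parenthetical ``descendant of $sn_1$ but not of $sn_2$ in $B$'' is the wrong notion. When grey edges lead from the right of the snarl back to its left, every node of $X$ is reachable from $sn_2$ via $sn_2\to sn_2'\to\cdots\to sn_1'\to sn_1$. This is allowed in the two-component case of Observation~\ref{obs:3.3}. The characterization your argument actually proves, and the one the corollary needs, is: both tests hold if and only if $t\in X\setminus\{sn_2\}$.
\item For $t\notin X$, do not split by ``left'' versus ``right.'' A node reachable both via a bypass and through the snarl may get its BFS tree path through $X$, and then $sn_1$ is a tree ancestor after all. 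Split instead on whether the tree path to $t$ contains $sn_1$. If it does, it must also contain $sn_2$, so the second test fails.
\item ``Every root-to-$t$ path enters $X$ through $sn_1$'' needs the root to lie outside $X$. This holds because otherwise reaching $sn_1'$ would require first leaving $X$ through $sn_2$. That would make $sn_2$ strictly closer to the root than $sn_1$, a contradiction.
\end{enumerate}
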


From Corollary~\ref{cor:3.9}, we get Algorithm~\ref{alg:algo1}, which, given a biedged graph $B$ that satisfies the traversal rules, the set $S_{RL}$ of $R-L$ snarls in $B$, the set $ftip$ of tips and cycle closing nodes in $B$, and the BFS tree $B_t$ of $B$, returns the set $U$ of ultrabubbles in $B$. As LCA queries in trees can be answered in constant time, the complexity of this algorithm is $O(K\cdot |ftip|)$, which is $O(Kn)$ in worst case.

\begin{algorithm}[t]
\caption{Enumerating ultrabubbles in a biedged graph $B$}\label{alg:algo1}
\begin{algorithmic}[1]
\Require The set $S_{RL}$ of $R-L$ snarls in $B$, the set $ftip$, and the BFS tree $B_t$ of $B$.
\Require $LCA_{B_t}(a,b)$: Lowest common ancestor of $a$ and $b$ in $B_t$.

\State $U \gets \emptyset$
\For{each $\{sn_1,sn_2\} \in S_{RL}$}
    \State $\mathit{F} \gets \textbf{true}$
    \For{each $t \in ftip$}
        \If{$LCA_{B_t}(t,sn_1)=sn_1$ and $LCA_{B_t}(t,sn_2)\neq sn_2$}
            \State $\mathit{F} \gets \textbf{false}$
            \State \textbf{break}
        \EndIf
    \EndFor
    \If{$\mathit{F} = \textbf{true}$}
        \State $U \gets U \cup \{(sn_1,sn_2)\}$
    \EndIf
\EndFor
\State \textbf{return} $U$
\end{algorithmic}
\end{algorithm}

\section{Materials and methods}\label{sec3}

\subsection{Graphs}
We downloaded bidirected variation graphs in GFA format from available sources (Table~\ref{tab:one}) and used a custom Python script to construct synthetic biedged graphs with different sizes, bubble complexities, and numbers of cycles and tips (Tables~\ref{tab:two}-\ref{tab:three}).
As some of the variation graphs were disconnected, we isolated and worked with a subset of their components or a subgraph of them; for example, lpa120.gfa is a 120-node subgraph of LPA.gfa. 

\begin{table}[h]
\caption{Datasets used in the experiments}\label{tab:one}
\small
\setlength{\tabcolsep}{4pt}
\renewcommand{\arraystretch}{1.15}
\begin{tabularx}{\textwidth}{@{}
  >{\raggedright\arraybackslash}p{0.30\textwidth}
  >{\raggedright\arraybackslash}X
  >{\raggedright\arraybackslash}p{0.18\textwidth}
@{}}
\toprule
Name & Source URL & Reference \\
\midrule
\path{C4-90.gfa} &
\url{https://zenodo.org/records/6617246} &
\cite{15} \\

\path{chr19.pan.fa.a2fb268.e820cd3.9ea71d8.smooth.gfa} &
\url{https://42basepairs.com/browse/s3/human-pangenomics/pangenomes/scratch/2021_07_30_pggb/chroms?file=chr19.pan.fa.a2fb268.e820cd3.9ea71d8.smooth.gfa.gz} &
\cite{14}, \cite{18} \\

\path{chr6.pan.fa.a2fb268.2ff309f.1300c8a.smooth.gfa} &
\url{https://s3-us-west-2.amazonaws.com/human-pangenomics/index.html?prefix=pangenomes/scratch/2021_05_06_pggb/gfas/} &
\cite{14}, \cite{18} \\

\path{chr6.C4.gfa} &
\url{https://github.com/pangenome/odgi/tree/master/test} &
\cite{13} \\

\path{chrM.pan.fa.6626ff2.7748b33.72587dd.smooth.gfa} &
\url{https://s3-us-west-2.amazonaws.com/human-pangenomics/index.html?prefix=pangenomes/scratch/2021_07_30_pggb/chroms/} &
\cite{14}, \cite{18} \\

\path{LPA.gfa} &
\url{https://github.com/pangenome/odgi/tree/master/test} &
\cite{13} \\

\path{lpa120.gfa} &
\url{https://github.com/pangenome/odgi/tree/master/test} &
\cite{13} \\

\path{MHC-57b.gfa} &
\url{https://zenodo.org/records/6617246} &
\cite{15} \\

\path{cerevisiae.fa.gz.d1a145e.417fcdf.7493449.smooth.final.gfa} &
\url{http://hypervolu.me/~erik/yeast/cerevisiae.fa.gz.d1a145e.417fcdf.7493449.smooth.final.gfa.gz} or \url{ https://doi.org/10.5281/zenodo.18484366 } &
\cite{18}, \cite{19} \\
\bottomrule
\end{tabularx}
\end{table}

\subsection{Graph preprocessing}

Given a bidirected graph $D$ in GFA format, we used the following procedure to transform it to a biedged graph with a structure in line with our traversal rules. First, we replaced each segment node in $D$ with a left and right node connected by a black edge $(L,R)$. Second, we deleted all edges connecting two right nodes or two left nodes. We used this simpler approach to handle graphs containing such $R-R$ and $L-L$ edges instead of transforming the graph as described in Section~\ref{sec:s1}, as we only used the graphs for benchmarking. The resulting graph was a biedged graph with a structure consistent with the previously defined traversal rules, and thereby isomorphic to a directed graph. Third, if the graph had multiples sources, we constructed a root as illustrated in Fig.~\ref{fig:msroots}. After these transformations, all the variation graphs used in our experiments had a root, but note that Section~\ref{sec:s2} describes a general approach of constructing a single source in a biedged graph isomorphic to a directed graph.

\subsection{Identifying snarls}

We used two approaches to identify snarls in the preprocessed graphs. First, we used the brute force naive approach, which for each pair of nodes of the graph, checks the snarl definition criteria of separation and minimality. This approach has cubic time complexity, but identifies all snarls, including nested snarls. Due to its time complexity, we used this naive approach on a subset of the graphs (Table~\ref{tab:two}).

Second, we used the vg tool's option of identifying snarls, which is based on Paten et al. \cite{1} and produces a hierarchical and nested structure of snarls. Specifically, the tool outputs the set of snarls as tuples of frontier nodes and gives additional info which of them are acyclic. We kept only the acyclic snarls for further processing, as these are the only possible ultrabubbles. As the algorithm used by the vg tool has close to linear time practical complexity, we used the vg tool on all the graphs.

\subsection{Identifying ultrabubbles}

We used the naive approach (Section~\ref{sec:UBnaive}) and Algorithm~\ref{alg:algo1} to identify the ultrabubbles in a set of snarls. The naive approach checked each snarl for presence of tips or cycles. For Algorithm~\ref{alg:algo1}, we first use the BFS tree to compute the distance from the root to the snarls' frontier nodes to identify the set of $R-L$ snarls as per Corollary~\ref{cor:2}. Second, in linear time we find all tips and cycle closing nodes with their corresponding closing grey edges as detailed in Section~\ref{sec:cycleclose}. Third, we pre-compute all paths from the root to all nodes by the Range Minimum Query (RMQ) – Euler tour combination technique, as explained in Section~\ref{sec:lcalinear}, thus allowing each LCA query to be answered in $O(1)$ time.

\subsection{Software}

We used Bandage [https://rrwick.github.io/Bandage/, ver. 0.8.1] \cite{27} for visualizing GFA files and vg [https://github.com/vgteam/vg,  ver. 1.62.0] \cite{8} to produce the set of nested snarls.

All Python scripts were run on a laptop with Intel\textsuperscript{\textregistered} Core\textsuperscript{\texttrademark} i7-1280p that contains 14 Cores and 20 Threads and has 16 GB RAM.

GPT (ChatGPT, developed by OpenAI) was used to support code development. 
The code is available at https://github.com/athanasios-zisis/UltraLCA.

\section{Results}\label{sec4}

We used a set of real variation graphs and synthetic biedged graphs to benchmark the LCA-based method (Algorithm~\ref{alg:algo1}) against the naive approach (Section~\ref{sec:UBnaive}) for identifying ultrabubbles in a set of snarls; see Methods for details on how the graphs were processed. We used the vg tool for identifying snarls, but for a set of small graphs, we also used the naive brute force approach for identifying snarls, as this method identifies all snarls according to their definition (Table~\ref{tab:two}). As expected, the naive approach identified a larger set of snarls, but both sets contained the same set of ultrabubbles (compare Tables~\ref{tab:two} and \ref{tab:three}).

\begin{table}[h]
\caption{Running times for the LCA and naive approach for enumerating all ultrabubbles in the snarls produced by the naive approach in different graphs. "NAME(.gfa)" is the file name of the graph; "NODES" and "EDGES" are the number of nodes and edges, respectively, in the graph; "REMOVED LL\&RR EDGES" is the number of $L-L$ and $R-R$ edges removed during preprocessing; "TIPS/CYCLES/FTIP" are the number of tips, cycle-closing nodes, and nodes in the $ftip$ set, respectively; "SNARLS NAIVE" is the number of snarls identified by the naive algorithm; "SN TIME" is the running time (in seconds) for identifying the snarls; "UL." is the number of ultrabubbles; "REJ" is the number of rejected $R-L$ snarls; "PRE TABLE" is the running time for computing the RMQ distance table; and "ALGO" and "NAIVE" are the running times for Algorithm~\ref{alg:algo1} and the naive approach, respectively.}\label{tab:two}
\tiny
\setlength{\tabcolsep}{1pt}
\renewcommand{\arraystretch}{0.95}

\begin{tabularx}{\textwidth}{@{}
  >{\raggedright\arraybackslash}X  
  *{11}{>{\centering\arraybackslash}c} 
@{}}


\midrule
\hdr{NAME(.gfa)} & \hdr{NODES} & \hdr{EDGES} & \hdr{\shortstack{REMOVED \\  LL\&RR \\ EDGES}} & \hdr{\shortstack{ TIPS/ \\
CYCLES/\\
FTIP}
} & \hdr{\shortstack{SNARLS\\
NAIVE}
} & \hdr{\shortstack{SN \\TIME}} & \hdr{UL.} & \hdr{REJ} & \hdr{\shortstack{PRE \\TABLE}} & \hdr{ALGO} & \hdr{NAIVE} \\
\midrule
\path{C4-90_VG_rem_nroot} & 19 & 20 & 10 & 7/0/7 & 29 & 0 & 7 & 22 & 0 & \textbf{0} & 0.001 \\
\path{lpa120.gfa} & 120 & 169 & - & 2/0/2 & 82 & 6 & 56 & 26 & 0 & \textbf{0} & 0.010 \\
\path{MHC-57b_VG} & 1068 & 1496 & - & 2/0/2 & 592 & 3417 & 376 & 216 & 0 & \textbf{0} & 0.59 \\
\path{chrM.pan.fa.6626ff2.7748b33.72587dd.smooth} & 1395 & 1887 & - & 2/0/2 & 1339 & 11577 & 456 & 883 & 0 & \textbf{0} & 2.85 \\
\path{chr6.pan.fa.a2fb268.2ff309f.1300c8a.smooth_comp6_rem_nroot} & 1700 & 2283 & 1 & 4/0/4 & 1643 & 25855 & 584 & 1059 & 0.1 & \textbf{0.01} & 9.188 \\
\path{chr6.C4.gfa} & 1748 & 2366 & - & 2/2/3 & 830 & 52070 & 582 & 248 & 0 & \textbf{0} & 0.81 \\
\midrule
\path{SYNAL1} & 236 & 289 & - & 6/4/9 & 225 & 23.9 & 138 & 87 & 0 & \textbf{0} & 0.058 \\
\path{SYNAL2} & 439 & 569 & - & 6/10/14 & 379 & 148 & 233 & 146 & 0 & \textbf{0.01} & 0.214 \\
\path{SYNAL3} & 598 & 767 & - & 6/12/16 & 511 & 373 & 337 & 174 & 0 & \textbf{0.01} & 0.33 \\
\path{SYNAL4} & 1107 & 1426 & - & 5/13/17 & 860 & 2376.3 & 596 & 264 & 0 & \textbf{0.01} & 0.81 \\

\bottomrule
\end{tabularx}
\end{table}

In most cases Algorithm~\ref{alg:algo1} outperforms the naive approach (Table~\ref{tab:three}). However, in cases where the ftip set is very large, Algorithm~\ref{alg:algo1} is slower, since we run an LCA query for every $t$ in the $ftip$ set. 

\begin{table}[h]
\caption{Running times for the LCA and naive approach for enumerating all ultrabubbles in the snarls produced by the vg tool in different graphs. Here, vg was run with the flag -T, which outputs all snarls that are nested and acyclic plus the trivial ones; "VG SNARLS -T" is the number of such snarls; see Table~\ref{tab:two} for details regarding the other columns. 
}\label{tab:three}
\tiny
\setlength{\tabcolsep}{1pt}
\renewcommand{\arraystretch}{0.95}

\begin{tabularx}{\textwidth}{@{}
  >{\raggedright\arraybackslash}X
  *{10}{>{\centering\arraybackslash}c}
@{}}

\toprule
\hdrc{NAME(.gfa)} & \hdrc{NODES} & \hdrc{EDGES}&  \hdrc{\shortstack{REMOVED \\  LL\&RR \\ EDGES}} & \hdrc{TIPS} & \hdrc{\shortstack{VG\\
SNARLS\\-T}
} & \hdrc{UL.}  & \hdrc{REJ} & \hdrc{\shortstack{PRE \\TABLE}} & \hdrc{ALGO} & \hdrc{NAIVE} \\
\midrule
\path{C4-90_VG_rem_nroot} & 19 & 20 & 10 & 7 & 9 & 7 & 2 & 0 & 0 & 0 \\
\path{lpa120.gfa} & 120 & 169 & - & 2 & 56 & 56 & 0 & 0 & \textbf{0} & 0.001 \\
\path{MHC-57b_VG} & 1068 & 1496 & - & 2 & 376 & 376 & 0 & 0 & \textbf{0} & 0.008 \\
\path{chrM.pan.fa.6626ff2.7748b33.72587dd.smooth} & 1395 & 1887 & - & 2 & 456 & 456 & 0 & 0 & \textbf{0} & 0.007 \\
\path{chr6.pan.fa.a2fb268.2ff309f.1300c8a.smooth_comp6_rem_nroot} & 1700 & 2283 & 1 & 4 & 586 & 584 & 2 & 0 & \textbf{0} & 0.008 \\
\path{chr6.C4.gfa} & 1748 & 2366 & - & 2 & 584 & 582 & 2 & 0 & \textbf{0} & 0.009 \\
\path{LPA.gfa} & 3751 & 5195 & - & 2 & 1305 & 1305 & 0 & 0 & \textbf{0.01} & 0.018 \\
\path{chr6.pan.fa.a2fb268.2ff309f.1300c8a.smooth_comp5_nroot} & 5475 & 7374 & - & 2 & 1896 & 1894 & 2 & 0 & \textbf{0.01} & 0.025 \\
\path{cerevisiae.fa.gz.d1a145e.417fcdf.7493449.smooth.final_comp11_rem_nroot} & 54721 & 73869 & 3 & 9 & 18172 & 18162 & 10 & 0.4 & \textbf{0.19} & 0.27 \\
\path{cerevisiae.fa.gz.d1a145e.417fcdf.7493449.smooth.final_comp4_nroot} & 72349 & 97855 & - & 7 & 23907 & 23898 & 9 & 0.6 & \textbf{0.27} & 0.343 \\
\path{cerevisiae.fa.gz.d1a145e.417fcdf.7493449.smooth.final_comp7_rem_nroot} & 216342 & 292679 & 7 & 55 & 71457 & 71395 & 62 & 2 & 3.41 & \textbf{1.172} \\
\path{chr19.pan.fa.a2fb268.e820cd3.9ea71d8.smooth_comp14_rem_nroot} & 2858719 & 3985788 & 786 & 458 & 904944 & 904804 & 140 & 35.8 & 325.85 & \textbf{17.608} \\
\path{chr6.pan.fa.a2fb268.2ff309f.1300c8a.smooth_comp1_rem_nroot} & 4906290 & 6856764 & 280 & 606 & 1555547 & 1555411 & 136 & 70.1 & 847.49 & \textbf{29.743} \\
\midrule
\path{Synth1} & 8091 & 10133 & - & 200 & 4662 & 4453 & 209 & 0 & 0.68 & \textbf{0.04} \\
\path{Synth2} & 173879 & 219006 & - & 152 & 102585 & 102421 & 164 & 1.5 & 12.11 & \textbf{0.978} \\
\path{Synth3} & 27886 & 35086 & - & 7 & 16542 & 16536 & 6 & 0.2 & \textbf{0.13} & 0.149 \\
\path{Synth4} & 37105 & 47488 & - & 7 & 5041 & 5035 & 6 & 0.3 & \textbf{0.08} & 0.14 \\
\path{Synth5} & 834549 & 1071462 & - & 157 & 467294 & 467110 & 184 & 8.7 & 58.37 & \textbf{5.485} \\
\path{Synth6} & 211938 & 273297 & - & 3 & 119602 & 119601 & 1 & 2 & \textbf{0.68} & 1.355 \\

\bottomrule
\end{tabularx}
\end{table}

\section{Discussion and conclusions}\label{sec2}
We have presented the theoretical framework showing that LCA queries can be used to determine if a snarl is an ultrabubble in a rooted bipartite biedged graph. This approach, which has complexity $O(Kn)$ improves on the $O(K(n+m)$ complexity of the naive approach, where $K$, $n$, and $m$ are the number of snarls, nodes, and edges, respectively. Theoretically, compared with the naive approach, this LCA approach should have improved run time for graphs with few tips and cycles or for dense graphs with many edges, which is supported by our benchmarking experiments on practical graphs. We note that additional run-time improvements may be had by integrating the LCA-based into existing approaches for enumerating snarls.

\section{Supplementary Material}\label{sec2}
\subsection{Building from a bidirected graph in gfa format, a biedged graph isomorphic to a directed graph}\label{sec:s1}
Given a bidirected graph in GFA format we can transform it into a biedged graph without inversions and/or backtracking as follows.
We scan the Graphical Fragment Assembly file (GFA) once. For each link, force both ends to forward orientation (``$+ +$''). If an end is reverse (``$-$''), redirect that end to a new node that stores the reverse complement of the same sequence segment creating that node the first time it is needed. If the link has an overlap string, reverse its operation order only when exactly one end flips; if both or neither flip, leave it unchanged. Update paths the same way by replacing any reverse step with the corresponding reverse complement node in forward orientation. Because every line is read and rewritten once and each sequence or overlap string is transformed at most once, the whole conversion complexity is linear time.

By the above process we end up to a biedged graph that is isomorphic to a directed graph as explained in this paper.

\subsection{Constructing a root (source) and an end (sink) of a variation-pangenome graph}\label{sec:s2}

\begin{figure}[H]
  \centering
  \includegraphics[width=\linewidth]{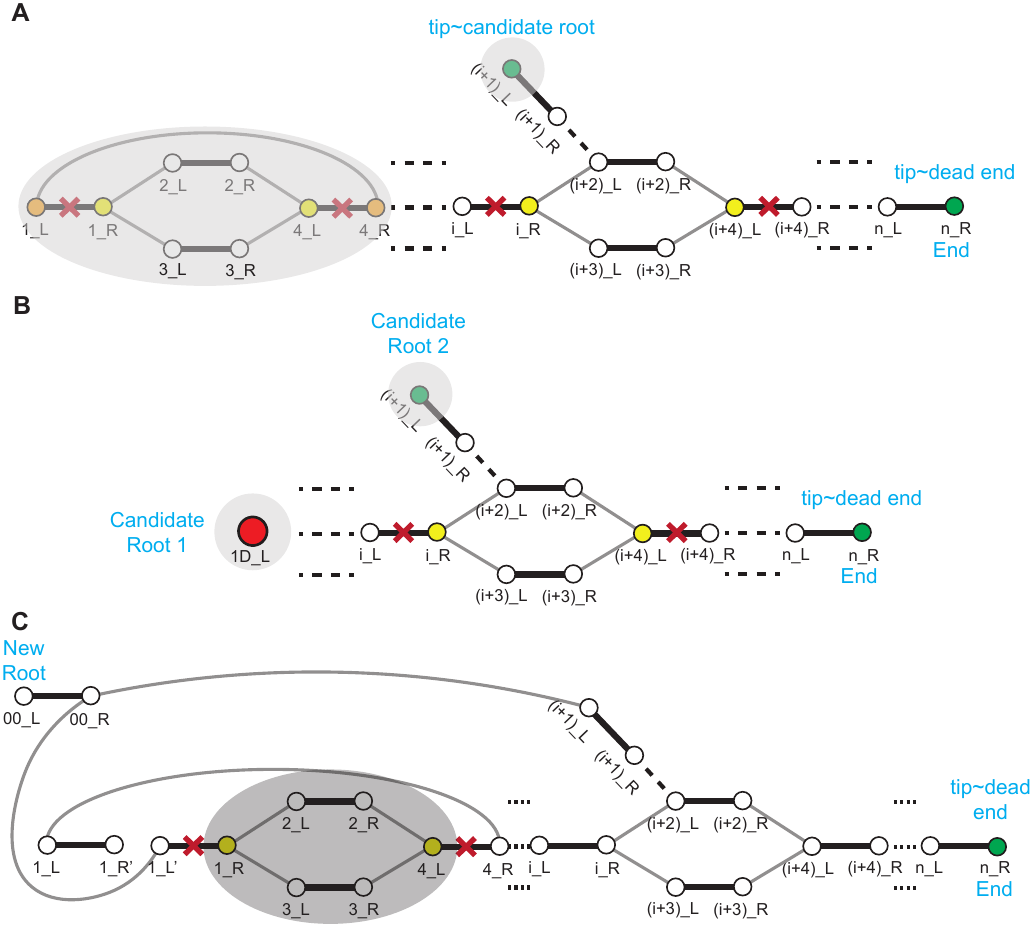} 
  \caption{
  \textbf{A)}A graph that has not a working root is illustrated , which means the presence of cycle structures or candidate roots that cannot reach all nodes of the graph, or a combination. Note that both $(i+1)\_L$ and $n\_R$ are tips but the first one is a candidate root since it has in-degree=0 while the second is a dead-end since it has out-degree=0.
  \textbf{B)}
  In the biedged graph the strongly connected components SCC are computed and the condensation graph is build having each SCC as supernode. Then the candidate roots are the supernodes like $1D\_L$ and other normal nodes like $(i+1)\_L$ both having in-degree=0. 
  \textbf{C)}
  An artificial black edge with nodes $00\_L$,$00\_R$ is added and $00\_R$ is connected with grey edges to every candidate root of the condensation graph with the procees expalined about the respective SCC of the supernode of the condensation graph. That makes $00\_L$ a working root for the graph.Note that by this process the snarl set of the graph might alter, like snarl ($i\_R$, $(i+4)\_L$), is not any more a snarl, but all ultrabubbles, like ($1\_R$, $4\_L$), and their frontier nodes remain unaffected.
  }
  \label{fig:dagroot}
\end{figure}

The conceptual strategy for creating a single source in a general graph $G$ is to transform the graph into a DAG $G'$, identify all $N^0$ nodes with in-degree zero in $G'$, and add a single node with outgoing edges to each node in $N^0$. Constructing a single sink requires a similar approach, except that the new sink node is connected with incoming edges from each of the nodes with out-degree zero in $G'$. Importantly, for our purpose, each ultrabubble in $G$ must be present in $G'$; that is, if the set of ultrabubbles in a graph $G$ is $UB(G)$, then $UB(G) \subseteq UB(G')$.

The following steps describe the details of this construction algorithm.
\begin{enumerate}
    \item \textbf{Transforming $G$ to a DAG.} Any graph $G$ can be transformed into a DAG by contracting each strongly connected component in $G$ into a single node. The transformed graph $G_C$ is known as the condensation of $G$. Identifying all strongly connected components, contracting these and keeping a map $f$ between nodes in $G_C$ and $G$ is $O(m + n)$ using e.g. Tarjan's algorithm \cite{21}. However, as $UB(G) \supset UB(G_C)$ if at least one of the strongly connected components of $G$ contains an ultrabubble, we use $G_C$ as an intermediary to identify all strongly connected components in $G$ that will be connected to the source in the transformed graph (see Fig.~\ref{fig:dagroot}\textbf{A}).

    \item \textbf{Identifying all nodes $N^0_C$ with in-degree zero in $G_C$.} This simple operation is $O(n + m)$ and identifies all strongly connected components that must be connected to the source (see Fig.~\ref{fig:dagroot}\textbf{B}). 

    \item \textbf{Connecting each strongly connected component in $N^0_C$ to the source.}
    Note that for each node in $N^0_C$ that maps to a single node $v \in G$, $v$ can be directly connected to the source by adding an outgoing grey edge from the source to $v$. The strongly connected components represented by the remaining nodes $v \in N^0_C$ are each connected using the following procedure (see Fig.~\ref{fig:dagroot}\textbf{C} and Fig.~\ref{fig:mroots}):
    \begin{enumerate}
    \item If the strongly connected component mapped by $v$ (that is, $f(v)$) contains no $R-L$ snarls (and therefore no ultrabubbles), the source can be connected to any (left) node in $f(v)$, which is $O(1)$.
    \item Otherwise, identify the set $S_{RL}$ of non-nested $R-L$ snarls of $f(v)$ and select the black edge $e$ connecting the right node $r$ of an arbitrary snarl in $S_{RL}$ and some node $n$ in $f(v)$. Create two new nodes $r'$ and $n'$ and split $e$ into $e_l$ and $e_r$ such that $e_l$ connects $n$ and $r'$ and $e_r$ connects $n'$ and $r$. Label the $e_l$ and $e_r$ with the label of $e$ and connect the source to $n'$. The latter operation of creating $r'$ and $n'$, splitting $e$, and connecting the source is $O(1)$.
    \end{enumerate}
\end{enumerate}

\begin{figure}[H]
  \centering
  \includegraphics[width=\linewidth]{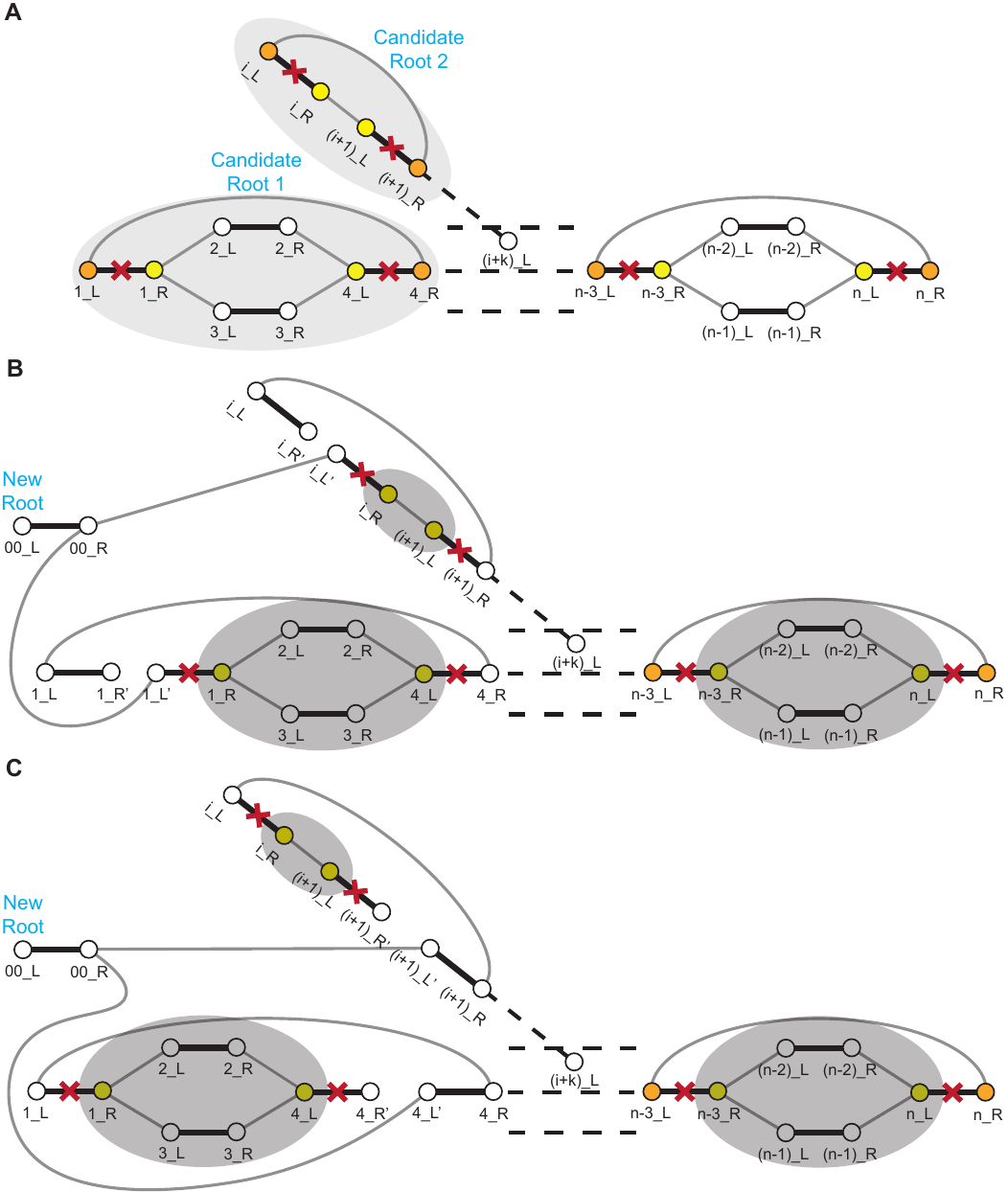} 
  \caption{
  \textbf{A)}A graph without any candidate roots has at least one SCC. Here we have 3 SCC and 2 of them are candidate roots.
  \textbf{B)} Following our approach to add an artificial root, here black edges ($i\_L$,$i\_R$) and ($1\_L$,$1\_R$) were splitted. Note that all ultrabubbles remain unaffected after the artificial root was added.
  \textbf{C)}Following our approach to add an artificial root, here black edges ($(i+1)\_L$,$(i+1)\_R$) and ($4\_L$,$4\_R$) were splitted. Note that again all ultrabubbles remain unaffected after the artificial root was added.
  }
  \label{fig:mroots}
\end{figure}

One approach to identify whether an $R-L$ snarl $(r_i, l_i)$ is nested within another $R-L$ snarl $(r_j, l_j), i \ne j$ is to investigate the finishing times $t_f$ of $r_j, r_i, l_i, l_j$ in a depth first search of $f(v)$ starting in $r_j$. First, by the definition of an $R-L$ snarl, for any $R-L$ snarl $(r_j, l_j)$ and a depth first search starting in $r_j$, $l_j$ is a proper descendant of $r_j$ in the corresponding depth first search forest and, consequently, $t_f(r_j) > t_f(l_j)$. Moreover, for any $R-L$ snarl $(r_j, l_j)$ in a strongly connected component $f(v)$ and for any depth first search starting outside the sub-graph defined by $(r_j, l_j)$, $t_f(r_j) > t_f(l_j)$. In contrast, starting the depth first search anywhere within the sub-graph defined by $(r_j, l_j) \setminus r_j$ (that is, excluding $r_j$ but including $l_j$), gives $t_f(r_j) < t_f(l_j)$. Consequently, for any $R-L$ snarl $(r_i, l_i)$ nested within another $R-L$ snarl $(r_j, l_j), i \ne j$ in a strongly connected component, a depth first search starting in $r_i$ will have $t_f(r_j) < t_f(l_j)$. Conversely, for any snarl $R-L$ snarl $(r_i, l_i)$ nested within another $R-L$ snarl $(r_j, l_j), i \ne j$ and a depth first search starting in $r_j$, $r_i$ and $l_i$ are proper descendants of $r_j$ in the corresponding depth first search forest and $t_f(r_j) > t_f(r_i) > t_f(l_i)$. More specifically, in the case where these nested snarls are in a strongly connected component, for any depth first search starting outside the sub-graph defined by the depth first search path from $r_j$ to $r_i$, $f_t(r_j) > f_t(r_i)$. Consequently, for nested snarls in a strongly connected component $f(v)$, starting a depth first search at a node $r_i$ of any $R-L$ snarl $(r_i, l_i) \in f(v)$ can result in two possibilities: (i) for all $R-L$ snarls $(r_j, l_j) \in f(v), t_f(r_j) > t_f(l_j)$, in which case the snarl $(r_i, l_i)$ is not nested within another $R-L$ snarl; (ii) for some $R-L$ snarl $(r_j, l_j) \in f(v), t_f(r_j) < t_f(l_j)$, in which case the snarl $(r_i, l_i)$ is nested within the $R-L$ snarl $(r_j, l_j)$. In case (i), we have identified a non-nested snarl and can proceed with the construction algorithm. In case (ii), $(r_i, l_i)$ can be nested within multiple $R-L$ snarls, all of which will satisfy $t_f(r_j) < t_f(l_j)$. Assuming that the set of such nested $R-L$ snarls is $S_N$ (that is, $\forall (r_i, l_i) \in S_N, t_f(r_i) < t_f(l_i)$), the $R$ node $r_i$ of the outermost snarl in $S_N$ will have the highest finishing time. Identifying the set $S_N$ is $O(V_{f(v)})$, where $V_{f(v)}$ is the number of nodes in $f(v)$; the same is true for identifying the $R$ node in $S_N$ with the highest finishing time, as the finishing times for the $R$ nodes in $S_N$ are sorted. Doing the initial depth first search is $O(E_{f(v)}+V_{f(v)})$, where $E_{f(v)}$ is the number of edges in $f(v)$, which gives $O(n+m)$ for the whole graph.

\begin{figure}[H]
  \centering
  \includegraphics[width=\linewidth]{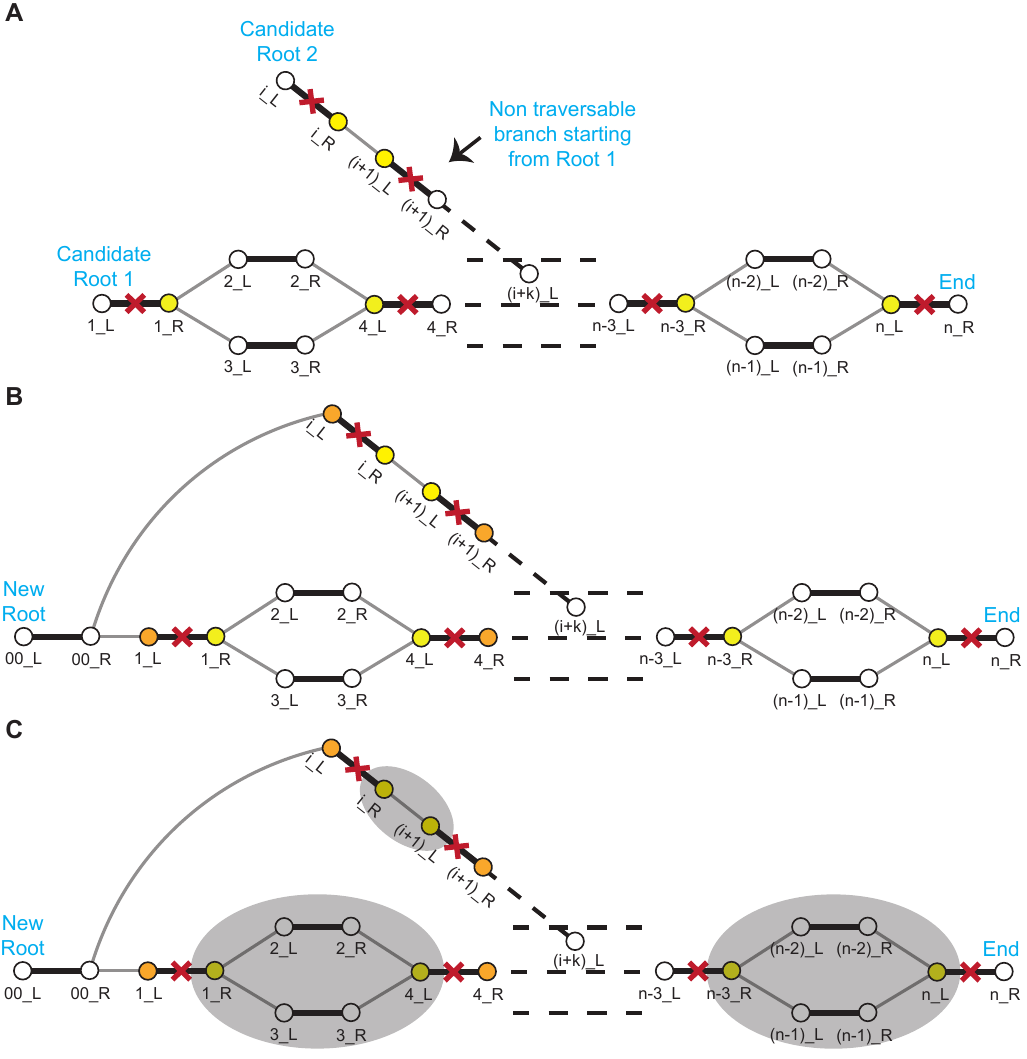} 
  \caption{
  \textbf{A)} A biedged graph that has more than one candidate roots, cannot be fully traversed by any of them since always there will be a non traversable branch.
  \textbf{B)} Adding an artificial black edge with nodes $00\_L$, $00\_R$ and then connecting $00\_R$ with grey edges to every candidate root of the biedged graph we end up having $00\_L$ as a working root.
  \textbf{C)} After the addition of the artificila root, while the snarl set is altered, all ultrabubbles ($1\_R$, $4\_L$), ($i\_R$, $(i+1)\_L$) and ($(n-3)\_R$, $n\_L$) remain unaffected.  
  }
  \label{fig:msroots}
\end{figure}

\subsection{Computing the cycle closing nodes and the corresponding grey cycle closing edges}\label{sec:cycleclose}
A cycle in a directed graph can be easily detected by a depth-first search (DFS). In a standard DFS, nodes are classified as unvisited, on the recursion stack, or fully explored (often denoted WHITE, GRAY and BLACK) and Tarjan showed \cite{21} during a DFS any back edge (an edge from a node to one of its GRAY ancestors on the recursion stack) signals the presence of a cycle. Later, Tarjan also showed \cite{22} that the above approach holds true even in graphs that they have nested and overlapping cycles since he proved that every directed cycle includes at least one back edge. Johnson’s algorithm \cite{20}, fully enumerates all simple-elementary cycles in a directed graph but  its time complexity $O((V+E)(C+1))$ is proportional to the number of cycles $C$, which may be exponential to the number of nodes of the graph.
However by our approach and because of Lemma~\ref{lem:3.6}, we are interested only at the cycle closing nodes and their corresponding (grey) closing edges and not to the explicit cycle paths. Thus, we use Tarjan approach of \cite{21} following the WHITE, GREY, BLACK edges and the completeness of this approach by \cite{22} and therefore in linear time $O(V+E)$ of a DFS execution we can mark every cycle closing node and corresponding cycle closing grey edge of every cycle of the given biedged graph including these in nested cycles.

\subsection{Answering LCA queries in linear time}\label{sec:lcalinear}
In general directed graphs, LCA queries can be handled efficiently without enumerating all paths as we mentioned in Section 3.2. We first contract all strongly connected components (SCCs) to build the condensation DAG, which is acyclic and preserves reachability \cite{23}. On this DAG, a lowest common ancestor of two nodes is any minimal common ancestor; that is a common ancestor with no descendant that also reaches both, and such LCAs may form a set and not a single node \cite{24}. If no common ancestor exists, then the LCA set is empty. To return one node per query, we can choose a deterministic representative from one of the minimal common-ancestor components defining tie breaks. For few queries, two reverse searches on the condensation DAG and an intersection are enough while for many queries, precomputing ancestor information (sets or dense bitsets) yields fast lookups. For all-pairs approach with $O(1)$ queries, there are DAG-specific algorithms with heavy preprocessing: a classic subcubic framework \cite{24}, an $O(nm)$ all-pairs method effective on sparse DAGs \cite{31}, and the current best sub cubic bound $O(n^{2.447})$  \cite{32} which is harder to implement. Here n and m refer to the number of vertices and edges of the condensation DAG.

For finding the LCA in a tree, we consider three classic approaches. The nearest common ancestor approach of Harel and Tarjan \cite{23} which supports online LCA queries in $O(1)$ time after $O(n)$ preprocessing, but it is relatively hard to implement.
A simpler alternative is binary lifting (jump pointers), which preprocesses $2^i$$-$ancestors in $O(n \log n)$ time and answers each LCA query in $O(n \log n)$
 time by first equalizing depths and then lifting both nodes simultaneously using decreasing powers of two, based on the jump-pointer table in \cite{24}.
The Euler tour +RMQ approach with ±1 depth array \cite{25} which reduces LCA to RMQ on an Euler depth sequence with adjacent differences ±1 and has a preprocessing $O(n)$ and query $O(1)$ complexity.
Although Harel-Tarjan’s algorithm and the Euler-tour + RMQ approach share the same asymptotic complexity, we chose the Euler-tour + RMQ method because it has a simpler and more straightforward implementation. We did not implement the space-efficient RMQ method of \cite{26} which provides a more space efficient variant of the same method, since this would add implementation complexity while our concern is the running time goal.

\bibliography{sn-bibliography}

\end{document}